\def\NAT@def@citea{\def\@citea{\NAT@separator}}
\theoremstyle{plain}
\newtheorem{theorem}{Theorem}[section]
\theoremstyle{definition}
\theoremstyle{remark}
\newtheorem{remark}{Remark}
\newcommand{\dnac}{{\delta_{\mathrm{n}}^{\mathrm{AC}}}}
\newcommand{\dndc}{{\delta_{\mathrm{n}}^{\mathrm{DC}}}}
\newcommand{\neff}{n_{\mathrm{eff}}}
\newcommand{\R}{\mathbb R}
\newcommand{\rmeas}{r^{\mathrm{meas}}}
\DeclareMathOperator*{\argmin}{arg\,min}
\begin{document}


\title{Regularization for the inversion of Fibre Bragg Grating spectra}

\author{Daniel Gerth\footnote{Email: daniel.gerth@mathematik.tu-chemnitz.de,  \textsuperscript{a}Fakult\"at f\"ur Mathematik, Technische Universit\"at Chemnitz, 09107 Chemnitz, Germany, \textsuperscript{b}Fakult\"at f\"ur Maschinenbau, Technische Universit\"at Chemnitz, 09107 Chemnitz, Germany}\textsuperscript{a}, 
 Susann Hannusch\textsuperscript{b}, Oliver G.~Ernst\textsuperscript{a}, and\\ J\"{o}rn Ihlemann\textsuperscript{b}}

\maketitle

\begin{abstract}
Fibre Bragg Gratings have become widespread measurement devices in engineering and other  fields of application. In all but a few cases, the relation between cause and effect is simplified to a proportional model. However, at its mathematical core lies a nonlinear inverse problem which appears not to have received much attention in the literature. In this paper, we present this core problem to the mathematical community and provide a first report on opportunities and limitations of a regularization approach. In particular, we show that difficulties arise from non-uniqueness and the absence of established parameter selection rules for nonlinear inverse problems with multiple regularization parameters. Nevertheless, the paper takes a first step toward extracting more information from a single FBG measurement.
\end{abstract}

\section{Introduction}\label{sec:intro}

Since their first demonstration in the late 1970s \cite{HillEtAl70}, \emph{fibre Bragg gratings (FBGs)} have become a widely used technology for measuring physical quantities such as strain, temperature, or concentration. 
They also have a broad and important role in other settings, and we refer to \cite{HillMeltz1997} for an overview. 
Our intended application is the measurement of strain in fibre-reinforced polymers, but in this work we focus on a fundamental inverse problem common to all applications. 
Technologically, an FBG consists of a segment in an optical fibre in which a periodic variation of the refractive index has been inscribed.  
Some common FBG inscription patterns are shown in Figure~\ref{fig:FBGtypes}.
\begin{figure}  \centering
\includegraphics[width=\linewidth]{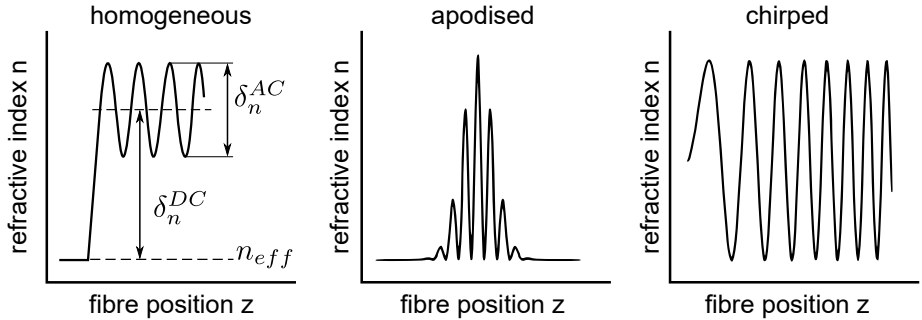}
\caption{Common refractive index ($n$) distributions along a fibre Bgragg grating.}
\label{fig:FBGtypes}
\end{figure}
Depending on the specific grating pattern, certain wavelengths of light are reflected at the grating as they propagate through the fibre. 
Hence, when the fibre carrying an FBG is probed with light, usually emanating from a tunable laser, a narrow band of wavelengths is reflected. 
The intensity of the reflected light (its \emph{spectrum}) serves as the measured data, and the fundamental task lies in recovering the refractive index distribution inside the grating zone from the observed spectrum. 
The quantity to be measured in the specific application is then inferred from the reconstructed refractive index distribution. 
A schematic view of an FBG and its measurements is shown in Figure~\ref{fig:FBGspectrum}. 

\begin{figure}  \centering
\includegraphics[width=0.9\linewidth]{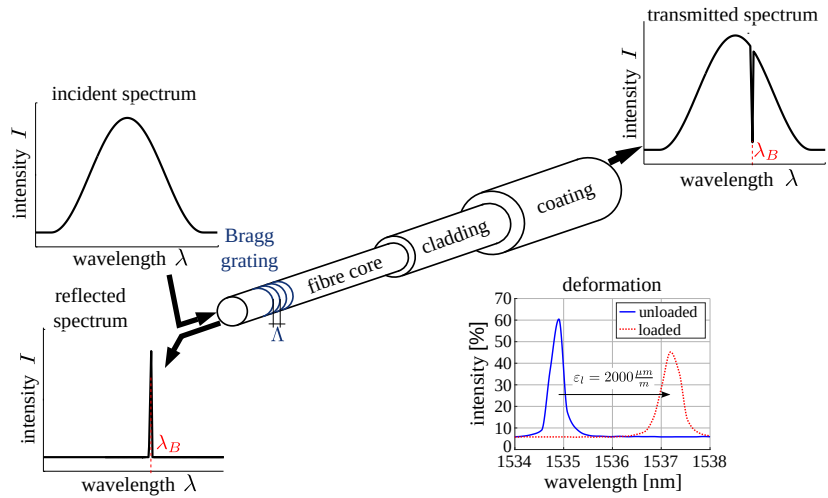}
\caption{Incident, reflected and transmitted spectra of light passing through a fibre Bragg grating sensor.}
\label{fig:FBGspectrum}
\end{figure}

In the majority of applications the spectrum consists of a single peak at a specific wavelength. 
Temperature changes or applied strain cause the peak to move to another wavelength. 
The difference between the old and new peak wavelengths allows a quantification of the applied strain (or other quantity of interest). 
It is well known that under certain conditions the single peak can broaden and even split into two (or even more) separate peaks. 
The simple formula for the peak deviation then becomes increasingly uninformative and inaccurate, and existing measurement devices then abort the measurement process or yield unrealistic data. 
Hence, peak deformation is widely regarded as an undesired phenomenon. 
However, such a multi-peak spectrum contains much more information besides the peak location. The aim of this work is to extend the analysis of FBG signals from mainly peak-focused considerations to the full spectrum, and to extract as much information from the spectrum as appears possible. 
We emphasize that this paper is written as an introduction of the problem to the inverse problems community and reports first results, but also leaves some open problems for future work.

The remainder of the paper is organized as follows. 
In Section~\ref{sec:fbg} we introduce basic FBG concepts as well as the model for the FBG refractive index distribution.
In Section~\ref{sec:forward} we sketch the derivation of the forward model to provide the physical background and discuss the implementation.
In Section~\ref{sec:unique} we provide a first result on the uniqueness of FBG spectra.
Our regularization approach is presented in Section~\ref{sec:reg}.
In Section~\ref{sec:experiments} we apply the theoretical results to experimental data.


\section{Fibre Bragg Gratings} \label{sec:fbg}

In the following we sketch the most important aspects of the physics underlying FBGs. 
Due to the complex nature of the topic, many details are omitted, and for these we refer to, for example, the monographs \cite{Kashyap1999,Marcuse1991,Okamoto2005}, or the survey papers \cite{HillMeltz1997,Erdogan1997} and references therein.

\par{\textbf{Optical waveguides.}}
As the name suggests, the base for an FBG is an optical fibre. 
Such fibres consist of three concentric layers: a relatively thin core with refractive index $n_\text{co}$ in which (most of) the light is guided, a comparatively thick cladding with refractive index $n_\text{cl}$, and a thin coating to protect the inner layers; see Figure \ref{fig:FBGspectrum}. 
Next we note that light is guided in a specific way in a fibre. 
Consider a plane wave travelling in the fibre at an incident angle $\theta$. 
Firstly, only incident angles $\theta\leq \sin^{-1}\left(\sqrt{n_\text{co}^2-n_\text{cl}^2}\right)$ can be guided, as otherwise the light is not reflected back into the core at the core-cladding interface, but rather  refracted into the cladding.
Secondly, even below this threshold, only a discrete set of inclination angles result in a propagating wave, as otherwise the wavefronts become out of phase and the wave breaks down. The superposition of the travelling wavefronts with constant phase in turn produces a standing wave inside the fibre. This standing wave is called a \emph{guided mode}. 
If the core refractive index $n_\text{co}$ and the cladding refractive index $n_\text{cl}$ are sufficiently close, then only a single mode is guided in the fibre.  
We will only consider this case in the following. 
We also mention that some modes can also be guided in the cladding, but this will not be pursued in this paper. 

\begin{remark}\label{rem:biref}
Another important effect one may encounter in measurements is \emph{birefringence}. 
This significantly adds to the complexity of the reconstruction, and in order to keep the presentation simple we ignore this effect in this paper. 
Birefringence results from the fact that each guided mode consists of two polarization states, which normally do not interact with each other. 
Mathematically speaking, the electric fields are orthogonal. 
Some external causes, most prominently a deformation of the fibre core from an initially circular cross section to an elliptical one, lead to differing propagation properties between the two polarization states. 
Orthogonality is lost, and the two polarization states can exchange energy. This is one of the main causes for the splitting of a the single peaks usually observed in FBGs. For more details see, e.g., \cite{Gafsi2000}. 
\end{remark}

The wavelength-dependent \emph{propagation constant} $\beta$ describes how phase and amplitude vary along the propagation direction. For a guided mode with incident angle $\theta$ travelling in a medium of refractive index $n_0$, it is given by
\[
	\beta = \frac{2\pi}{\lambda} n_0 \cos \theta.
\]
The \textit{effective refraction index} $\neff := n_0\cos\theta$ describes the propagation in the fibre direction. 
Because in a fibre $\theta$ is small (as $n_\text{co}-n_\text{cl} \ll 1$), $\neff$ and $n_\text{co}$ are often used (almost) interchangeably \cite{Erdogan1997}. 
One might say that $\neff$ is simply the average refraction index in the grating section of the fibre.

\par{\textbf{Bragg gratings}} are reflection gratings. 
The effect of a planar wave incident on a grating of (period-)length $\Lambda$ at an angle $\theta_1$ in a medium with refractive index $n_1$ can be described by the equation
\begin{equation}\label{eq:grating}
	n_1\sin \theta_1 = n_2 \sin\theta_2 + m\frac{\lambda}{\Lambda},
\end{equation}
where $n_2$ is the refractive index of the medium after refraction, $\theta_2$ the ``outgoing'' angle, $m\in \mathbb{N}$ the refraction order, and $\lambda$ the wavelength \cite{Born}. 
For Bragg gratings, the dominant refraction is first order ($m=-1$) refraction of a mode under its negative incident angle, i.e., $\theta_2=-\theta_1$. 
For a fixed mode with $n_1=n_2=\neff$, \eqref{eq:grating} can be written in the form of the well-known \emph{Bragg condition} for the design wavelength $\lambda_B$,
\begin{equation}\label{eq:gratingcondition}
	\lambda_{B} = 2\neff\Lambda.
\end{equation}
This is the basic relation which suffices as a basis for the measurement of strain, temperature, etc., in many applications. 
Specifically, a change in these quantities results in a shift of the  peak wavelength $\lambda_B$. 
For example, uniaxial strain $\varepsilon_z$ in fibre direction at constant temperature yields a peak shift of
\[
	d\lambda_B
	=
	2 \left[
	\left( \Lambda\frac{\partial \neff}{\partial \varepsilon_z} \right)
	+
	\neff\left(\frac{\partial \Lambda}{\partial \varepsilon_z} \right) 
	\right]
	d\varepsilon_z,
\]
see, e.g., \cite{Gafsi2000}. 
Using some engineering mechanics, one can reformulate this to the relation
\begin{equation}\label{eq:strain}
	\frac{\Delta \lambda_B}{\lambda_B} = K \varepsilon_z,
\end{equation}
i.e., the measured shift in peak wavelength $\Delta \lambda_B$ relative to the original peak wavelength is proportional ($K$ depends on material parameters) to the longitudinal strain.
This is a highly simplified model. The measurement devices (\textit{interrogators}) measure the intensity of the light reflected by the FBG over several wavelengths, the spectrum.  

However, \eqref{eq:gratingcondition} and the occurring quantities are only a rough characterization of the FBG. We now turn to a more detailed description. 
To this end, note that FBGs are usually inscribed into the fibre by phase-masking or via the interference of two UV-light rays (see, e.g., \cite{HillMeltz1997,Kashyap1999}). 
Although the fibre and thus the Bragg grating are three-dimensional objects, the refractive index distribution is modelled as one-dimensional, varying only along the longitudinal coordinate. 
Specifically, let the fibre section containing the FBG be of length $L>0$. 
Then the common model to describe the grating structure is
\begin{equation}\label{eq:dnreal}
	\delta_n(z)
	=
	\delta_n^\text{DC}(z) + \delta_n^\text{AC}(z) \cos\left(\frac{2\pi z}{\Lambda}+\Phi(z) \right), 
	\qquad -\frac{L}{2} \leq z\leq \frac{L}{2},
\end{equation}
where $z$ is the longitudinal coordinate. 
The functions $\delta_n^\text{DC}$ and $\delta_n^\text{AC}$, which influence the magnitude of the refractive index perturbations, are sometimes connected via the \textit{fringe visibility} $\nu$ such that $\delta_n^\text{AC}=\nu\delta_n^\text{DC}$, but we employ the slightly more general setting here. 
In this paper the base period length of the grating is denoted by $\Lambda$, and all deviations from this basic period are encoded in the function $\Phi$. 
The functions $\dndc,\dnac$ and $\Phi$ allow to describe complex grating structures including common FBG designs as those shown in Figure~\ref{fig:FBGtypes}.

The refractive index in the fibre core is then given by
\begin{equation} \label{eq:n_total}
	n(z) = \neff + \delta_n(z), 
	\qquad -
	\frac{L}{2} \leq z \leq \frac{L}{2}.
\end{equation}

\section{Forward Model} \label{sec:forward}

The forward model can be derived from Maxwell's equations combined with \emph{coupled mode theory}. 
The complete derivation is rather technical and does not provide much insight into the problem from a mathematical point of view. 
We therefore only sketch the most important steps. 
For an exhaustive derivation we refer to \cite{Kashyap1999}. 

From Maxwell's equations one derives the wave equation
\begin{equation} \label{eq:waveeq}
	\Delta\vec{E}
	=
	\epsilon_0\mu_0 \frac{\partial^2 \vec{E}}{\partial t^2}
	+
	\mu_0\frac{\partial^2 \vec{P}}{\partial t^2}+\nabla(\nabla\cdot \vec{E}),
\end{equation}
relating the electric field $\vec{E}$ and the polarization $\vec{P}$ inside the fibre. 
Here $\mu_0,\epsilon_0$ are vacuum permeability and permittivity, respectively. Using the weakly guiding approximation (assuming the electric field has only transversal components), one can drop the term $\nabla(\nabla\cdot \vec{E})$. 
The electric field and the polarization are further related via $\vec{P}=\epsilon_0(\epsilon_r-1)\vec{E}$, where $\epsilon_r$ the relative permittivity of the medium (here: the fibre core). 
In non-magnetic materials such as optical fibres, the permittivity is connected to the refractive index via $\epsilon_r=n^2$. 
In case of perturbations, we write
\begin{equation}
\vec{P}=\vec{P}_0+\vec{P}_{\delta_n}=\epsilon_0(\epsilon_r-1)\vec{E}+\epsilon_0\Delta\epsilon(z)\vec{E}
\end{equation}
and
\begin{equation}
	\epsilon_r+\Delta\epsilon(z)
	=
	(n+\delta_n(z))^2.
\end{equation}
Typically the term $\delta_n^2$ is neglected, as it is generally at least two orders of magnitude smaller than $n$. We thus have
\begin{equation}
	\Delta\epsilon(z)\approx 2n\delta_n(z),
\end{equation}
and the polarization can be written as
\begin{equation}
	\vec{P} = \epsilon_0(\epsilon_r-1)\vec{E} + 2\epsilon_0 n_0 \delta_n(z) \vec{E}.
\end{equation}
Inserting this into \eqref{eq:waveeq} yields, after some rearrangement,
\begin{equation} \label{eq:waveeq2}
	\Delta\vec{E} - \epsilon_0  \mu_0 \epsilon_r \frac{\partial^2 \vec{E}}{\partial t^2}
	=
	2\mu_0\epsilon_0n_0\delta_n(z)\frac{\partial^2 \vec{E}}{\partial t^2}.
\end{equation}
Next, one makes an ansatz for the electric field. For the single mode fibres we consider here, we set
\begin{equation}\label{eq:ansatz}
\vec{E}(x,y,z,t)=A_f(z)\exp(i\beta_fz+i\omega t)\vec{e}_{f}(x,y)+A_b(z)\exp(-i\beta_bz+i\omega t) \vec{e}_{b}(x,y).
\end{equation}
Here, $\vec{e}_{f}$ and $\vec{e}_{b}$ are the (complex valued) transversal electric fields of the forward ($f$) and backward ($b$) propagating waves which propagate along the fibre with amplitude $A_f(z)$ and $A_b(z)$, and with propagation constants $\beta_f$ and $\beta_b$ (notice the different signs of the propagation constants indicate reversed direction), respectively. We insert \eqref{eq:ansatz} into \eqref{eq:waveeq2}, which yields after some calculation the relation

\begin{align}\label{eq:we_ansatznserted}
2i\beta_f&\frac{\partial A_f(z)}{\partial z}\exp(i\beta_fz+i\omega t)\vec{e}_{f}-2i\beta_b\frac{\partial A_b(z)}{\partial z}\exp(-i\beta_bz+i\omega t)\vec{e}_{b}\nonumber\\&=-2\omega^2\mu_0\epsilon_0n_0\delta_n(z)\left(A_f(z)\exp(i\beta_fz+i\omega t)\vec{e}_{f}+ A_b(z)\exp(-i\beta_bz+i\omega t) \vec{e}_{b}\right).
\end{align}
In particular, we have employed the \textit{slowly varying envelope approximation}, i.e., $\frac{\partial^2 A_{\cdot}(z)}{\partial z^2} \ll \beta_{\cdot}\frac{\partial A_{\cdot}(z)}{\partial z}$, which allows to neglect second-order derivatives. Next we split \eqref{eq:we_ansatznserted} into two equations by multiplying with the complex conjugate transversal fields $\vec{e}_f^\ast$ and $\vec{e}_b^\ast$, and integrating over the fibre cross section ($x$-$y$-plane). We will only purse the first path further, the latter follows analogously. Note that in the absence of refractive index perturbations the transversal fields are orthonormal, $\int \vec{e}_{b} \vec{e}_{b}^\ast\, d(x,y)=1$, $\int \vec{e}_{b} \vec{e}_{f}^\ast\, d(x,y)=0$ and analogously for $b$ and $f$ exchanged. Due to conservation of energy (the frequency of the generated wave must not change), we can divide \eqref{eq:we_ansatznserted} by $\exp(i\omega t)$ and, all together, obtain
\begin{align}\label{eq:coupledA}
&\frac{2}{i\omega \epsilon_0n_0}\frac{\partial A_f(z)}{\partial z}\exp(i\beta_fz)\nonumber\\
&=\left(A_f(z)\exp(i\beta_fz)\int\delta_n(z)\vec{e}_{f}\vec{e}_{f}^\ast\, d(x,y)+ A_b(z)\exp(-i\beta_bz)\int \delta_n(z) \vec{e}_{b}\vec{e}_{f}^\ast\, d(x,y)\right),
\end{align}
where we insert the grating refractive index distribution \eqref{eq:grating}, in which we replace $\cos(\cdot)=\frac{1}{2}(\exp(i \cdot)+\exp(-i\cdot))=:\frac{1}{2}(\exp(i\cdot)+c.c.)$ for brevity. This yields
\begin{align}
&\frac{2}{i\omega \epsilon_0n_0}\frac{\partial A_f(z)}{\partial z}\exp(i\beta_fz)\label{eq:cAlong_1}\\
&=A_f(z)\exp(i\beta_fz)\int\left(\delta_n^{DC}+\frac{\delta_n^{AC}(z)}{2}\left(\exp\left(i\left(\frac{2\pi z}{\Lambda}+\Phi(z) \right)\right)+c.c.\right)\right)\vec{e}_{f}\vec{e}_{f}^\ast\, d(x,y) \label{eq:cAlong_2}\\
&+A_b(z)\exp(-i\beta_bz)\int\left(\delta_n^{DC}+\frac{\delta_n^{AC}(z)}{2}\left(\exp\left(i\left(\frac{2\pi z}{\Lambda}+\Phi(z) \right)\right)+c.c.\right)\right)\vec{e}_{b}\vec{e}_{f}^\ast\, d(x,y).\label{eq:cAlong_3}
\end{align}
From this formulation, we continue with the last crucial step, the so called phase-matching. The principle behind this is that modes can only exchange energy if they are in phase for a sufficiently long way. Comparing the phases in \eqref{eq:cAlong_1} and \eqref{eq:cAlong_2}, we see that $\frac{\partial A_f(z)}{\partial z}$ and the $\dndc$-component of $A_f(z)$ are always in phase $\beta_fz$, while this is never the case for $\frac{\partial A_f(z)}{\partial z}$ and the $\dnac$-component of $A_f(z)$ due to the modulation $\frac{2\pi z}{\Lambda}+\Phi(z)$. On the other hand, comparing the phases in \eqref{eq:cAlong_1} and \eqref{eq:cAlong_3}, we see that the $\dndc$-term of $A_b(z)$ is never in phase with $\frac{\partial A_f(z)}{\partial z}$, while this may (partially) be the case for the $\dnac$-term. To describe the latter properly, one introduces the \textit{phase-synchronous factor} $\beta_\phi:=\frac{2\pi}{\Lambda}+\beta_b$, such that a continuous transfer of energy occurs for $\beta_f=\beta_\phi$. We define the phase mismatch
\[
\Delta \beta:=\beta_f-\beta_\phi=\beta_f+\beta_b-\frac{2\pi}{\Lambda}.
\]
For the single mode fibres under consideration, the remaining integrals are easily evaluated and yield the coupling coefficients
\begin{equation}\label{eq:kappadc}
\kappa_{DC}(z):=n_0\omega\epsilon_0\int_F \delta_n^{DC}\vec{e}_{jt}\vec{e}_{jt}^\ast\,d(x,y)=\frac{2\pi}{\lambda}\dndc(z)
\end{equation}
and the $AC$-coupling coefficient
\begin{equation}\label{eq:kappaAC}
\kappa_{AC}(z):=n_0\omega\epsilon_0\int_F \frac{\delta_n^{AC}}{2}\vec{e}_{kt}\vec{e}_{jt}^\ast\,d(x,y)=\frac{\pi}{\lambda}\dnac(z).
\end{equation}
Now we can write Eqs. \eqref{eq:cAlong_1}-\eqref{eq:cAlong_3} in a more compact way as
\begin{equation}\label{eq:couplekappaA}
\frac{\partial A_f(z)}{\partial z}=i \frac{2\pi}{\lambda}\dndc(z) A_f(z)+i\frac{\pi}{\lambda}\dnac(z) A_b(z)\exp(-i(\Delta\beta z-\Phi(z)).
\end{equation}
Repeating the same steps for the second branch of \eqref{eq:we_ansatznserted}, we obtain
\begin{equation}\label{eq:couplekappaB}
\frac{\partial A_b(z)}{\partial z}=-i\frac{2\pi}{\lambda}\dndc(z)A_b(z)-i\frac{\pi}{\lambda}\dnac(z) A_f(z)\exp(i(\Delta\beta z-\Phi(z))).
\end{equation}
Finally, we substitute
\begin{align}
R(z,\lambda):&=A_f(z)e^{i(\Delta \beta(\lambda) z-\Phi(z))/2}\nonumber\\
S(z,\lambda):&=A_b(z)e^{-i(\Delta \beta(\lambda) z-\Phi(z))/2}\label{eq:RS_pair}
\end{align}
(where $R$ is short for reference, or the light going into the FBG, and $S$ is the signal, or measured, reflected light), and adopt the common notation
\begin{equation}\label{eq:sigmahat}
\hat \sigma(z,\lambda):=\delta(\lambda)+\sigma(z,\lambda)-\frac{1}{2}\frac{\partial \Phi}{\partial z}
\end{equation}
with the convention (note: for single-mode fibres $|\beta_f|=|\beta_b|=:\beta$)
\begin{equation} \label{eq:delta}
	\delta(\lambda)
	=
	2\Delta\beta(\lambda) 
	=
	\beta-\frac{\pi}{\Lambda}=\beta-\beta_D
	=
	2\pi n_\text{eff}\left(\frac{1}{\lambda}-\frac{1}{\lambda_D}\right),
\end{equation}
$\lambda_D$ from \eqref{eq:gratingcondition} and (for a single-mode fibre) 
\[
	\sigma(z,\lambda)
	=
	\kappa^\text{DC}(z,\lambda)
	=
	\frac{2\pi}{\lambda}\delta_n^\text{DC}(z).
\]
This finally yields the well-known system of coupled ordinary differential equations 

\begin{align}\label{eq:coupledDE_erdogan}
	\frac{d R(z,\lambda)}{dz}-i\hat\sigma(z,\lambda) R(z,\lambda)
	&=
	i\kappa_\text{AC}(z,\lambda) S(z,\lambda)\nonumber\\
	\frac{d S(z,\lambda)}{dz}+i\hat\sigma(z,\lambda) S(z,\lambda)
	&=
	-i\kappa_\text{AC}(z,\lambda) R(z,\lambda)
\end{align}
for $z$ in the grating zone, i.e., $-\frac{L}{2}\leq z\leq\frac{L}{2}$, of an FBG of length $L>0$. 
To find a solution, one typically imposes the boundary conditions $R(-\frac{L}{2},\lambda)=1$ (light enters the FBG with full intensity) and $S(\frac{L}{2},\lambda)=0$ (no light is reflected beyond the FBG).

Finally, the measured data corresponds to the (relative) intensity of the reflected light,
\begin{equation} \label{eq:data}
	r(\lambda)
	=
	\left|\frac{S(-\frac{L}{2},\lambda)}{R(-\frac{L}{2},\lambda)} \right|^2.
\end{equation}
Note that the coupled system \eqref{eq:coupledDE_erdogan} has complex-valued coefficients, but the measured data consists only of intensities. 
The phase information is lost. 
Therefore, the task of recovering the functions $\dndc,\dnac$, and $\Phi$, which are hidden in the coefficients $\hat\sigma$ and $\kappa_{AC}$, from the data $r(\lambda)$, is a \emph{phase retrieval problem} with a nonlinear forward operator.

Since one is only interested in the reflected spectrum, the system of equations \eqref{eq:coupledDE_erdogan} can be rewritten as a Ricatti-type differential equation by making the substitution $\rho(z,\lambda):=\frac{S(z,\lambda)}{R(z,\lambda)}$. 
This yields
\begin{equation}\label{eq:ricatti_ode}
	\frac{d\varrho(z,\lambda)}{d z}
	=
	-i\kappa_{AC}(z,\lambda)-2 i\hat\sigma(z,\lambda)\varrho(z)-i\kappa_{AC}(z,\lambda)\varrho(z)^2,
	\qquad \varrho\left(\frac{L}{2}\right)=0,
\end{equation}
and $r(\lambda)=|\rho(\lambda)|^2$. 
In case of a homogeneous grating, i.e., $\dndc(z)\equiv \mathrm{const}$, $\dnac(z)\equiv \mathrm{const}$, and $\Phi(z)\equiv 0$ such that $\hat\sigma(z,\lambda)$ and $\kappa_{AC}(z,\lambda)$ are also constant for fixed $\lambda$, then \eqref{eq:coupledDE_erdogan} (and \eqref{eq:ricatti_ode}, respectively) have an analytic solution. Defining
\begin{equation}\label{eq:gamma}
\gamma(\lambda):=\sqrt{\kappa(\lambda)^2-\hat\sigma(\lambda)^2}
\end{equation}
one can show that
\begin{equation}\label{eq:analytic_solution}
r(\lambda)=\frac{\sinh^2(\gamma(\lambda) L)}{\cosh^2(\gamma(\lambda) L)-\frac{\hat\sigma(\lambda)^2}{\kappa(\lambda)^2}}.
\end{equation} 
In practice, integrating \eqref{eq:coupledDE_erdogan} or \eqref{eq:ricatti_ode} numerically is rather time consuming. Instead, one commonly uses the \emph{transfer matrix approach} \cite{Erdogan1997}. 
The idea behind this approach is to divide the Bragg grating of length $L$ in $M$ subelements of uniform length $\Delta z$ that are sufficiently long ($\Delta z \gg \Lambda$). 
For each subelement $j$, $j=1,\dots,M$, one has
\begin{equation}
\begin{pmatrix}
R_j\\S_j
\end{pmatrix}=T_j\begin{pmatrix}
R_{j-1}\\ S_{j-1}
\end{pmatrix}
\end{equation}
where $(R_j,S_j)^T$ are the amplitudes of ingoing and reflected spectrum before the subelement, $(R_{j-1},S_{j-1})^T)$ the amplitudes after the subelement, and the transfer matrix $T_j$ is given by
\begin{equation}
T_j=\begin{pmatrix}
\cosh(\gamma_j \Delta z)-i\frac{\hat\sigma_j}{\gamma_j}\sinh(\gamma_j \Delta z) & -i\frac{\kappa_j}{\gamma_j}\sinh(\gamma_j \Delta z)\\ \frac{\kappa_j}{\gamma_j}\sinh(\gamma_j \Delta z) & \cosh(\gamma_j \Delta z)+i\frac{\hat\sigma_j}{\gamma_j}\sinh(\gamma_j \Delta z)
\end{pmatrix}.
\end{equation}
Above, $\gamma_j:=\sqrt{\kappa_{AC}(z_j,\lambda)^2-\hat\sigma(z_j,\lambda)^2}$, and $\hat\sigma_j,\kappa_j$ are the coefficients \eqref{eq:sigmahat} and \eqref{eq:kappaAC} averaged over the subelement $j$. Numerically, $\gamma_j,\hat\sigma_j,\kappa_j$ are evaluated as $\gamma(z_j),\hat\sigma(z_j),\kappa(z_j)$ where $z_j$ is the central coordinate of the subelement $j$.

One then links all subelements by multiplying their transfer matrices such that with
\begin{equation}\label{eq:TM_assembly}
T:=\Pi_{j=1}^M T_j
\end{equation}
one has
\begin{equation}\label{eq:TM_equation}
\begin{pmatrix}
R(-\frac{L}{2})\\S(-\frac{L}{2})
\end{pmatrix}=T\begin{pmatrix}
R(\frac{L}{2})\\S(\frac{L}{2})
\end{pmatrix}.
\end{equation}
Together with the boundary conditions ($R(-\frac{L}{2},\lambda)=1$ and $S(\frac{L}{2},\lambda)=0$) one finally obtains
\begin{equation}\label{eq:reflectedspecTM}
r(\lambda)=|\rho(\lambda)|^2=\left|\frac{T_{21}(\lambda)}{T_{11}(\lambda)} \right|^2
\end{equation}
for the reflection intensity.

To sum up, the forward operator 
\begin{equation}\label{eq:forward_op}
F:(L_2(\R)\times L_2(\R)\times L_2(\R))\rightarrow L_2(\R),\quad(\dndc(z),\dnac(z),\Phi(z))\mapsto r(\lambda)
\end{equation}
 maps three (potentially) unknown functions to the spectrum. To discretize the problem, we use discretization of the unknowns implied by the transfer matrix formulation. 
The spectrum is discretized by pointwise evaluation, which in practice is determined by the measurement device. 
Note that the rather coarse discretization here reduces the degrees of freedom significantly, and acts as a first important regularization. 
Even more, fine discretization of the unknown functions over the entire grating length is infeasible. 
In order to fully resolve $n(z)$ as in \eqref{eq:n_total}, a high spatial resolution is needed to avoid aliasing effects.
For example, a typical period length $\Lambda=532 \mathrm{nm}$ in \eqref{eq:dnreal} would require roughly 4000 discretization points per millimetre FBG-length to satisfy the Nyquist sampling criterion, if one wanted to discretize \eqref{eq:dnreal} in a straightforward way. 
As a comparison, our experimental setup yields about 10 significant points $r(\lambda_i)$ in the spectrum.

\section{Uniqueness} \label{sec:unique}

An important question for the inversion of an FBG-spectrum is its uniqueness: 
Is a spectrum uniquely determined by the three functions $\dndc,\dnac,\frac{\partial\Phi}{\partial z}$, or can multiple configurations produce the same data? 
We have only found one partial answer in the literature. 
In \cite{GillPeters2004} it was mentioned that the period distribution $\Lambda(z)$ (corresponding in our notation to $\frac{\partial\Phi}{\partial z})$ is reversible without affecting the data, i.e., one can not discern between $\frac{\partial\Phi}{\partial z}(z)$ and $\frac{\partial\Phi}{\partial z}(-z)$. This is true in their setting, but not in general. Except for this reversal, the authors claim uniqueness with respect to the distribution of $\Phi(z)$ over the FBG, and attribute it to the fact that the matrix multiplication in the transfer matrix formulation in \eqref{eq:TM_assembly}, \eqref{eq:reflectedspecTM} is not commutative. However, this only guarantees that the transfer matrix does not introduce additional non-uniqueness. The actual source of non-uniqueness lies in the physical model. What holds is the main argument of \cite{GillPeters2004} that the FBG can not distinguish between light propagating in forward and backward direction. The following theorem explains this mathematically and gives an extended result on the uniqueness of the FBG spectra. We stress that we do not claim completeness of the list. 

\begin{theorem} \label{thm:unique}
Let $r(\lambda)=F(\dndc,\dnac,\Phi)$ be a reflected FBG spectrum.
Then the following ambiguities hold:
\begin{enumerate}[(a)] \setlength\itemsep{.3em}
\item $F(\dndc,\dnac,\Phi)=F(\dndc,\dnac,\Phi+c)$ for all $c\in \R$,
\item $F(\dndc,\dnac,\Phi)=F(\dndc,-\dnac,\Phi)$,
\item $F(\dndc,\dnac(z),\Phi)=F(\dndc,\dnac(-z),\Phi)$ if $\dndc(z)=\dndc(-z)$, $\Phi(z)=-\Phi(-z)$,
\item $F(\dndc(z),\dnac,\Phi(z))=F(\dndc(-z),\dnac,-\Phi(-z))$ if $\dnac(z)=\dnac(-z)$.
\end{enumerate}
\end{theorem}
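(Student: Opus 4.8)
\emph{Proof plan.} The plan is to trace how the triple $(\dndc,\dnac,\Phi)$ enters the model and, for each item, to exhibit an elementary symmetry of the coupled system~\eqref{eq:coupledDE_erdogan} and the data rule~\eqref{eq:data}. By~\eqref{eq:sigmahat}, \eqref{eq:kappadc}, \eqref{eq:kappaAC}, the triple influences $F$ only through the coefficient functions $\hat\sigma(z,\lambda)=\delta(\lambda)+\tfrac{2\pi}{\lambda}\dndc(z)-\tfrac12\tfrac{\partial\Phi}{\partial z}(z)$ and $\kappa_{AC}(z,\lambda)=\tfrac{\pi}{\lambda}\dnac(z)$; the boundary conditions $R(-L/2,\lambda)=1$, $S(L/2,\lambda)=0$ and~\eqref{eq:data} do not depend on it. Item~(a) is then immediate, since $\hat\sigma$ contains $\Phi$ only through $\partial\Phi/\partial z$ and $\kappa_{AC}$ not at all, so $\Phi\mapsto\Phi+c$ leaves both coefficients unchanged. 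For~(b), fix $\lambda$ and let $(R,S)$ solve~\eqref{eq:coupledDE_erdogan} for $(\dndc,\dnac,\Phi)$; a one-line substitution shows that $(R,-S)$ solves~\eqref{eq:coupledDE_erdogan} with $\kappa_{AC}$, hence $\dnac$, replaced by its negative, and this pair still satisfies $R(-L/2)=1$, $-S(L/2)=0$. Since $r(\lambda)=|S(-L/2)/R(-L/2)|^2=|{-}S(-L/2)/R(-L/2)|^2$, the spectra coincide.

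Items~(c) and~(d) I would reduce to a \emph{reflection identity}: a configuration with coefficients $\hat\sigma(\cdot,\lambda),\kappa_{AC}(\cdot,\lambda)$ and its spatial mirror, with coefficients $\hat\sigma(-\cdot,\lambda),\kappa_{AC}(-\cdot,\lambda)$, produce the same spectrum. To prove this, let $(R,S)$ solve~\eqref{eq:coupledDE_erdogan} with the stated boundary conditions and verify directly that $z\mapsto\big(S(-z),R(-z)\big)$ solves~\eqref{eq:coupledDE_erdogan} for the mirrored coefficients (the substitution $z\mapsto-z$ also interchanges the $R$- and $S$-slots). Its boundary values are $\big(S(L/2),R(L/2)\big)=\big(0,R(L/2)\big)$ at $z=-L/2$ and $\big(S(-L/2),R(-L/2)\big)=\big(S(-L/2),1\big)$ at $z=L/2$; feeding these into the transfer-matrix relation~\eqref{eq:TM_equation} written for the mirrored configuration, whose transfer matrix I call $\tilde T$, the first component reads $0=\tilde T_{11}\,S(-L/2)+\tilde T_{12}$, so $r(\lambda)=|S(-L/2)|^2=|\tilde T_{12}/\tilde T_{11}|^2$, while by~\eqref{eq:reflectedspecTM} the mirrored configuration's own reflectivity is $|\tilde T_{21}/\tilde T_{11}|^2$. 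Because $\operatorname{tr}M=0$ and $M^{*}J+JM=0$ for $M=\left(\begin{smallmatrix}i\hat\sigma&i\kappa_{AC}\\-i\kappa_{AC}&-i\hat\sigma\end{smallmatrix}\right)$ and $J=\left(\begin{smallmatrix}1&0\\0&-1\end{smallmatrix}\right)$, every FBG transfer matrix lies in $SU(1,1)$ (the standard structure of the transfer matrix, cf.~\cite{Erdogan1997}); in particular $\tilde T_{21}=\overline{\tilde T_{12}}$, and the two expressions for the spectrum agree.

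It then remains to check that the hypotheses of~(c) and~(d) turn the transformed configuration into the spatial mirror of the original one. In~(c) only $\dnac$ is reflected, and $\dndc(z)=\dndc(-z)$ together with $\Phi(z)=-\Phi(-z)$ (hence $\partial\Phi/\partial z$ even) make $\hat\sigma(z,\lambda)=\hat\sigma(-z,\lambda)$, so the transformed pair $\big(\hat\sigma(\cdot,\lambda),\tfrac{\pi}{\lambda}\dnac(-\cdot)\big)$ equals $\big(\hat\sigma(-\cdot,\lambda),\kappa_{AC}(-\cdot,\lambda)\big)$. In~(d), $\dnac(z)=\dnac(-z)$ gives $\kappa_{AC}(z,\lambda)=\kappa_{AC}(-z,\lambda)$, and replacing $\dndc(z)$ by $\dndc(-z)$ and $\Phi(z)$ by $-\Phi(-z)$ --- using $\tfrac{d}{dz}\big({-}\Phi(-z)\big)=\tfrac{\partial\Phi}{\partial z}(-z)$ --- turns $\hat\sigma(z,\lambda)$ into $\hat\sigma(-z,\lambda)$; again the transformed pair is the mirror of the original, and the reflection identity yields the claim.

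The main obstacle I anticipate is the closing step of the reflection identity: after the substitution $z\mapsto-z$ one is left, a priori, with $|\tilde T_{12}/\tilde T_{11}|^2$ rather than $|\tilde T_{21}/\tilde T_{11}|^2$, and these are equal only by virtue of the $SU(1,1)$ ($J$-unitary) structure of the transfer matrix, which would not survive for a generic product of $SL(2,\mathbb C)$ matrices. The only other point requiring care is the parity bookkeeping in~(c) and~(d): that $\Phi$ odd forces $\partial\Phi/\partial z$ even, and the chain-rule sign in $\tfrac{d}{dz}\big({-}\Phi(-z)\big)=\tfrac{\partial\Phi}{\partial z}(-z)$.
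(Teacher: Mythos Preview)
Your proof follows essentially the same route as the paper's: (a) via the fact that only $\partial\Phi/\partial z$ enters $\hat\sigma$; (b) via a sign symmetry (the paper uses the Riccati equation~\eqref{eq:ricatti_ode} and $\rho\mapsto-\rho$, you use the coupled system and $(R,S)\mapsto(R,-S)$, which amounts to the same thing); and (c), (d) via the reflection $z\mapsto -z$ combined with interchanging $R$ and $S$, followed by exactly the parity bookkeeping you spell out.

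The one genuine difference is that you are more careful than the paper about the reflection step. The paper simply records that after the substitution $t=-z$, $\tilde R(t)=S(-z)$, $\tilde S(t)=R(-z)$, the transformed system has the same form as~\eqref{eq:coupledDE_erdogan} with mirrored coefficients, and stops there, invoking the physical reversibility of the FBG from~\cite{GillPeters2004}. It does not check that the boundary conditions and the data rule~\eqref{eq:data} survive the swap. You correctly notice that after exchanging $R$ and $S$ the boundary data are no longer the canonical $R(-L/2)=1$, $S(L/2)=0$, and you close the resulting gap by reading off $|\tilde T_{12}/\tilde T_{11}|^2$ from the transfer relation and then using the $SU(1,1)$ ($J$-unitary) structure of the transfer matrix to get $|\tilde T_{12}|=|\tilde T_{21}|$. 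This is the correct extra ingredient --- and precisely the obstacle you anticipated; without it the reflection argument is formally incomplete. So your version is a rigorous completion of the paper's sketch rather than a different proof.
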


\begin{proof}
Since the coupled system \eqref{eq:coupledDE_erdogan} only uses $\frac{\partial \Phi}{\partial z}$, any constant in $\Phi$ disappears, hence a) holds.
Because of \eqref{eq:kappaAC}, $\dnac$ and $\kappa_{AC}$ have the same sign, which is lost due to taking the absolute value in \eqref{eq:data}. This is easily seen from \eqref{eq:ricatti_ode}. Multiplying the equation by $-1$, we can write
\[
	\frac{d\left(-\varrho(z,\lambda)\right)}{d z}
	=
	-i(-\kappa_{AC}(z,\lambda))
	-2 i\hat\sigma(z,\lambda)(-\varrho(z,\lambda))
	-i(-\kappa_{AC}(z,\lambda))\varrho(z)^2,
\]
which yields the identical ODE for $-\rho$. 
The sign, however, is lost in the intensity \eqref{eq:reflectedspecTM}. 
Note that such a trivial ambiguity is common for nonlinear phase retrieval problems, see, e.g., \cite{autoconv2014}.

For c) and d), we use the reversibility of the FBG as noted in \cite{GillPeters2004}. Take the coupled system \eqref{eq:coupledDE_erdogan}, exchange the roles of $R$ and $S$, and set $t=-z$, i.e., $\tilde R(t,\lambda):=S(-z,\lambda)$ and $\tilde S(t,\lambda)=R(-z(\lambda)$. Then the system can be written as 
\begin{align}\label{eq:coupledDE_transformed}
\frac{d \tilde R(t,\lambda)}{dt}-i\hat\sigma(-t,\lambda) \tilde R(t,\lambda)&=i\kappa_{AC}(-t,\lambda) \tilde S(t,\lambda)\nonumber\\
\frac{d \tilde S(t,\lambda)}{dt}+i\hat\sigma(-t,\lambda) \tilde S(t,\lambda)&=-i\kappa_{AC}(-t,\lambda) \tilde R(t,\lambda),
\end{align}
which coincides with \eqref{eq:coupledDE_erdogan} if $\hat \sigma(-t,\lambda)=\hat\sigma(t,\lambda)$ and $\kappa_{AC}(-t,\lambda)=\kappa_{AC}(t,\lambda)$. Because only the derivative of $\Phi$ enters $\hat \sigma$, this yields the requirement $\frac{\partial\Phi(t)}{\partial t}=\frac{\partial\Phi(-t)}{\partial t}$, which holds if $\Phi(t)=-\Phi(-t)$.
\end{proof}

It is also worth mentioning that the non-uniqueness for FBGs is already inherent in the Bragg-wavelength \eqref{eq:gratingcondition}, $\lambda_B=2\neff\Lambda$. If neither $\neff$ nor $\Lambda$ are known, there are infinitely many pairs $(\neff,\Lambda)$ that produce the same $\lambda_B$. Related to this, we deem it important to report a potential ``pseudo non-uniqueness'' of the spectra with respect to practical measurements. Namely, despite being different, some configurations of the unknowns may yield spectra that are only distinguishable in small differences that may easily get lost during the measuring or (forward) simulation process. To exemplify this, we simulate two spectra with two distinct peaks at wavelengths $\lambda_1=1550$nm and $\lambda_2=1560$nm. All parameters are identical, except that in one spectrum we set 
\[
\dndc(z)=\begin{cases}0&z<0\\ \neff\left(\frac{1560nm}{1550nm}-1\right)&z\geq0\end{cases} \mbox{ and }\Phi(z)=0,
\]
whereas for the other spectrum we set 
\[
\dndc(z)=0 \mbox{ and } \Phi(z)=\begin{cases}0&z<0\\ 2\pi z\left(\frac{2\neff}{1560nm}-\frac{2n_{eff}}{1550nm} \right)  & z\geq 0\end{cases}.
\]
As shown in Figure \ref{fig:pseudounique}, this yields spectra that visibly only differ in the low intensity regions away from the peaks. The maximal difference is smaller than $10^{-2}$, which as discussed in the experiments in Section \ref{sec:experiments}, is already within the noise level.

\begin{figure}
\includegraphics[width=0.49\linewidth]{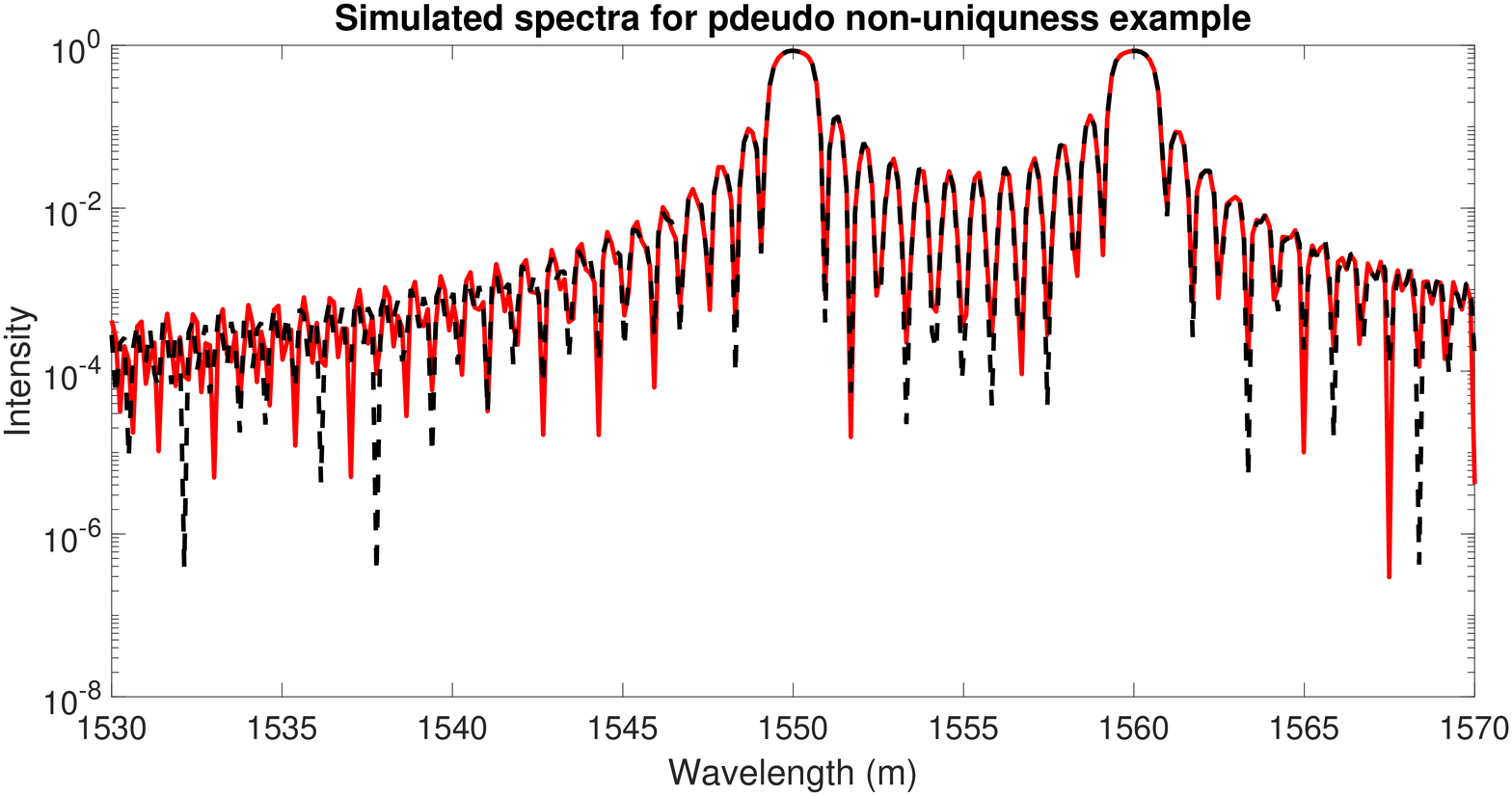}\includegraphics[width=0.49\linewidth]{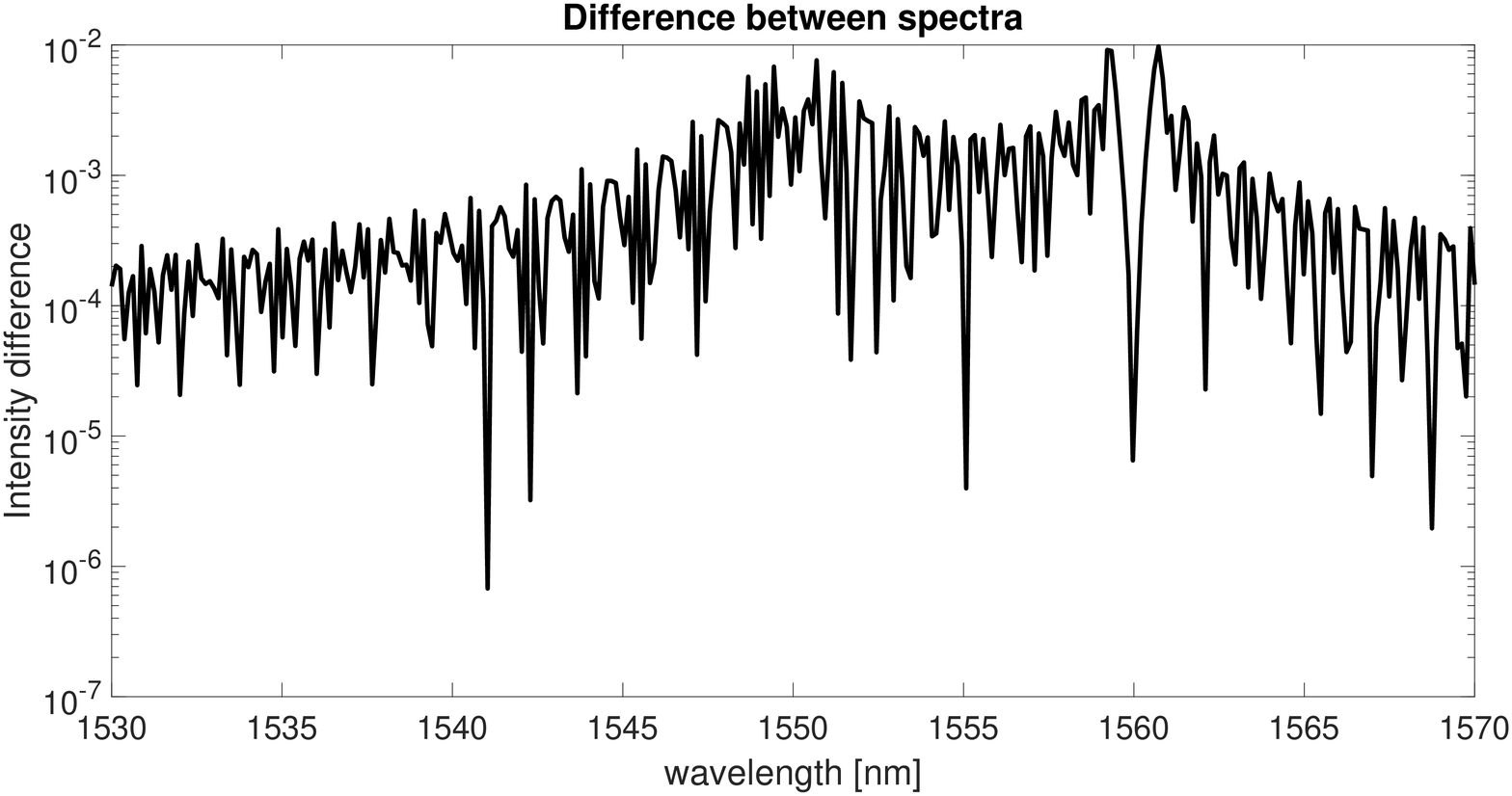}\caption{Simulated spectra (left) and their difference (right) from the pseudo non-uniqueness example. While the spectra are not identical, the peak position is, and all differences are in the small side lobes away from the peak. Only highly accurate measurements could distinguish between the spectra.}\label{fig:pseudounique}
\end{figure}

We have shown that FBG spectra are, in general, not uniquely determined and that they may appear identical despite varying slightly. 
Therefore, when conducting experiments, it is advisable to employ as much asymmetry as possible in the positioning or the expected reaction of the FBG, as well as use precise measuring devices.

\section{Regularization} \label{sec:reg}

In this section we present a regularization approach for several configurations of the potential unknowns. 
As explained in Section \ref{sec:forward}, the full problem consists of identifying three functions ($\dnac,\dndc,\Phi$) from the measured spectrum $r$. However, it is not always necessary to consider this general setting. 
Therefore, we begin with some special cases.

\subsection{Homogeneous FBG} \label{ssec:hom}

The most common FBG design is a \emph{homogeneous grating} characterized by
\[
   \dndc(z) \equiv \mathrm{const}, \quad
   \dnac(z) \equiv \mathrm{const}, \text{ and } 
   \Phi(z) \equiv 0. 
\]
Even then, in the authors' experience, the manufacturers of the FBG are not always able to provide precise values for $\dnac$ or $\dndc$. Therefore, in a first step, which can also be seen as a calibration for later test cases, we aim to recover the two parameters $\dndc,\dnac\in\R$ from the spectrum of a homogeneous FBG. An experimental validation is carried out in Section~\ref{sec:experiments}.

To find the unknown parameters, we calculate
\begin{equation}
[\dndc^*,\dnac^*]=\argmin_{\dndc,\dnac\in\R} \|F[\dndc,\dnac](\lambda)-\rmeas(\lambda)\|_2^2.
\end{equation}
The minimizer is calculated in MATLAB by way of the function \texttt{lsqnonlin} using the Levenberg-Marquardt algorithm. 
Due to the low dimensionality of the problem, no regularization is necessary. 
However, as usual for highly nonlinear problems, the initial guess may influence the quality of the reconstruction. 
Due to the simplicity of the setting it is fairly easy to find a good initial guess by hand, since the only two free parameters have distinct effects: $\dndc$ moves the spectrum along the $\lambda$-axis and $\dnac$ controls the intensity (height of the peak). An example of such a recovery is shown in Figure \ref{fig:dndcdnac_easy}. 
While it is not too surprising that we obtain perfect recovery of the parameters given the simplistic setting, we would like to highlight that this is an important calibration step for the experiments in practice. 
Specifically, the calculated value of $\dndc$ can be used to correct the position of the peak and thus potentially eliminates a source of error in the strain calculation via \eqref{eq:strain}.

\begin{figure}\centering
\includegraphics[width=0.7\linewidth]{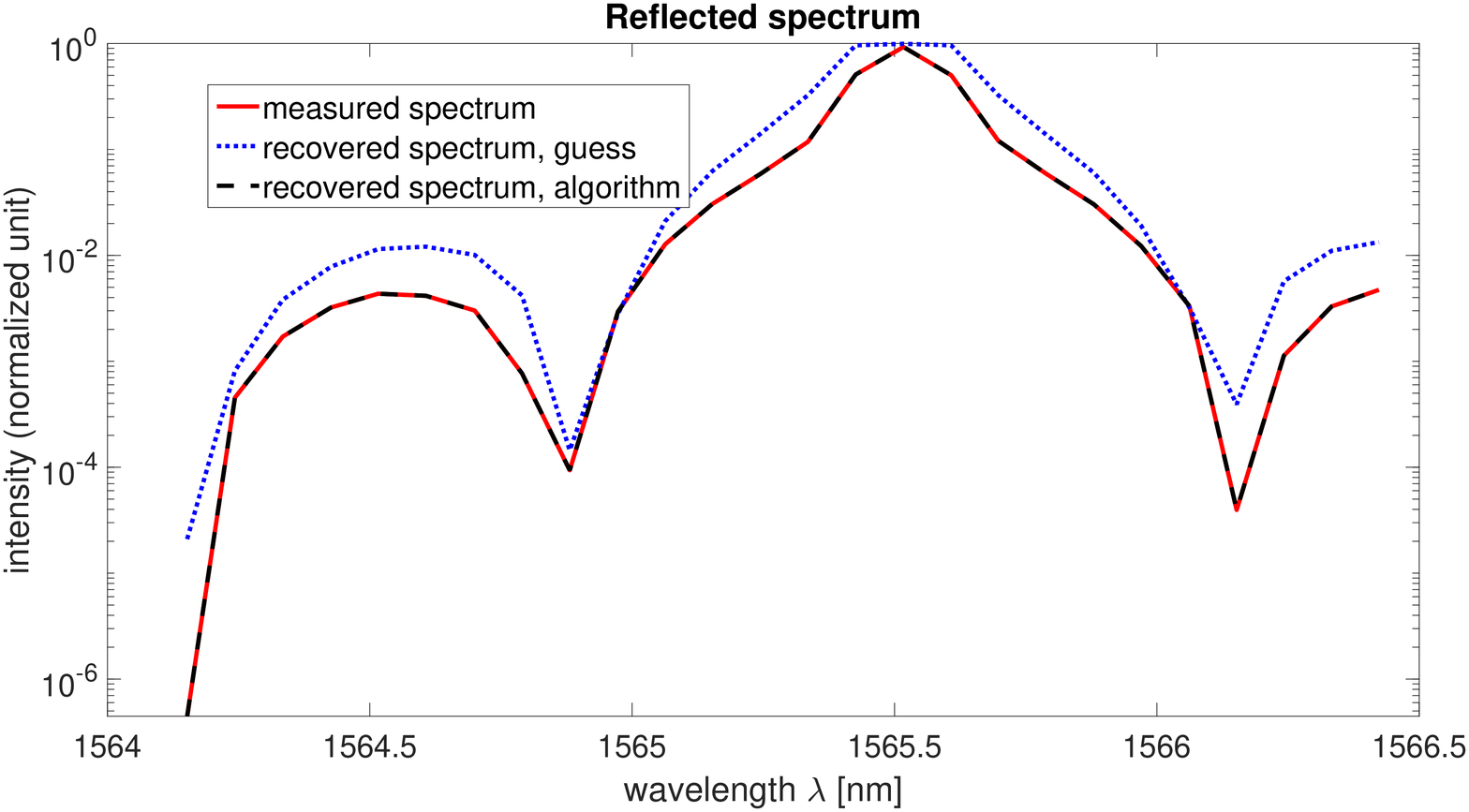}
\caption{Simulated and recovered spectrum for the retrieval of constant $\dndc$ and cosntant $\dnac$ as in Section \ref{ssec:hom}. For the initial guesses $\dnac=0.0002$ and $\dndc=0$, we obtained a perfect reconstruction of the true values $\dnac=0.0001234$ and $\dndc=-0.0000002$.}
\label{fig:dndcdnac_easy}
\end{figure}

\subsection{Tapered and Apodised FBGs} \label{ssec:apodized}

The need for regularization arises as soon as an inhomogeneous refractive index distribution is to be found, i.e., a combination of $\dndc,\dnac$, or $\Phi$ is to be recovered as functions resolved over the FBG length. In this section we consider $\dndc$ and $\dnac$ to be unknown, which can be interpreted as the determination of apodized and/or tapered FBGs (cf.\ Figure~\ref{fig:FBGtypes}). 

In our example we consider an apodized and tapered FBG of length $L=2\mathrm{mm}$ with $\Phi\equiv 0$ and
\begin{align*}
\dndc^\dag(z)&=\begin{cases}0.0002 & -0.001\leq z\leq 0\\ 0.1z+0.0002 & 0<z\leq 0.001\end{cases},\\
\dnac^\dag(z)&=\begin{cases} 0.0005 & -0.001\leq z \leq -0.0005\\ -0.2z+0.0004 & -0.0005<z\leq 0.0005 \\ 0.0003 & 0.0005<z\leq 0.001\end{cases}
\end{align*}

For the discretization we use the transfer matrix approach with $M=16$ subelements, corresponding to a resolution of $125\mathrm{\mu m}$. The data consists of $120$ measured points with a wavelength spacing of $0.16\mathrm{nm}$, which is of comparable size to the real data. 
To find the minimizer we again employ the \texttt{lsqnonlin}-function of MATLAB to calculate
\begin{align*}
&[\dndc^*(z),\dnac^*(z)]\\ &=\argmin_{\dndc,\dnac\in\R^{16}} \|F[\dndc(z),\dnac(z)](\lambda)-\rmeas(\lambda)\|_2^2+\alpha \|\dndc(z)\|_{H_1}^2+\beta\|\dnac(z)\|_{H^1}^2,
\end{align*}
i.e., we penalize the $H^1$-norm
\[
\| x\|_{H_1}^2=\|x\|_2^2+\|x^\prime\|_2^2
\]
to stabilize the recovery, and we now have two regularization parameters $\alpha$ and $\beta$. 
Among several tested penalty functionals, this gave the most reliable results. 
An appropriate choice of $\alpha$ and $\beta$ then yields a good approximation to the correct functions $\dndc$ and $\dnac$, as can be seen in Figure~\ref{fig:dndcdnac}. 
Hence, we have demonstrated that it is possible to extract information from the FBG that is far more detailed than the usual peak-evaluation. A crucial step and a centrepiece of future research is the automatization of the search of a suitable pair of regularization parameters $[\alpha,\beta]$. Currently we pick the regularization parameters by hand knowing the exact solutions, which is clearly not ideal. However, only few methods are documented in the literature for the simultaneous estimation of two regularization parameters, and none of them seem to work in our situation. One approach is a generalization of the L-curve method. In the classical setting with one regularization parameter, the L-curve was introduced in \cite{Hansen1993} and has become a popular heuristic parameter sekection rule. 
It features a $\log$-$\log$-plot of residual versus the corresponding penalty value, which often attains roughly the shape of an L. 
It is then advised to choose the regularized solution corresponding to the corner of the L. 
The principle has been extended to two-parameter regularization in \cite{BelgeEtAl2002} and recently successfully applied in \cite{HofHofPich}. 
In this case one plots the triples of residual and the two penalty values, each on a logarithmic scale. As generalization of the corner of the L, one then picks the regularized solution corresponding to the point on the ``L-surface" with highest curvature, typically to be found close to the origin. 
Unfortunately, this method appears to be unsuitable for the recovery of the FBG refractive index distribution, see Figure~\ref{fig:Lsurf}. 
While in principle every cross-section of the L-surface is L-shaped and thus the points of high curvature are easily found, a comparison with the plot of the total reconstruction error $e:=\sqrt{\|\dndc^\dag(z)-\dndc^*(z)\|_2^2+\|\dnac^\dag(z)-\dnac^*(z)\|_2^2}$ (see also Figure \ref{fig:Lsurf}) reveals that the best approximate solutions are found somewhere in the large, flat plane corresponding to the smaller regularization parameters. 
This plane, however, also contains the highly oscillating, underregularized approximations, as well as some approximations where the regularization parameters are already too large. 
The residual carries no meaningful information about the reconstruction error, so that other methods have to be developed. 
This observation also renders discrepancy-based parameter selection rules infeasible. 
Since our experimental goal is to calculate strain within a body, our plan is to link the FBG inversion with a finite element model of the body, and to find suitable regularization parameters through linking the two approaches.

\begin{figure}
\includegraphics[width=0.49\linewidth]{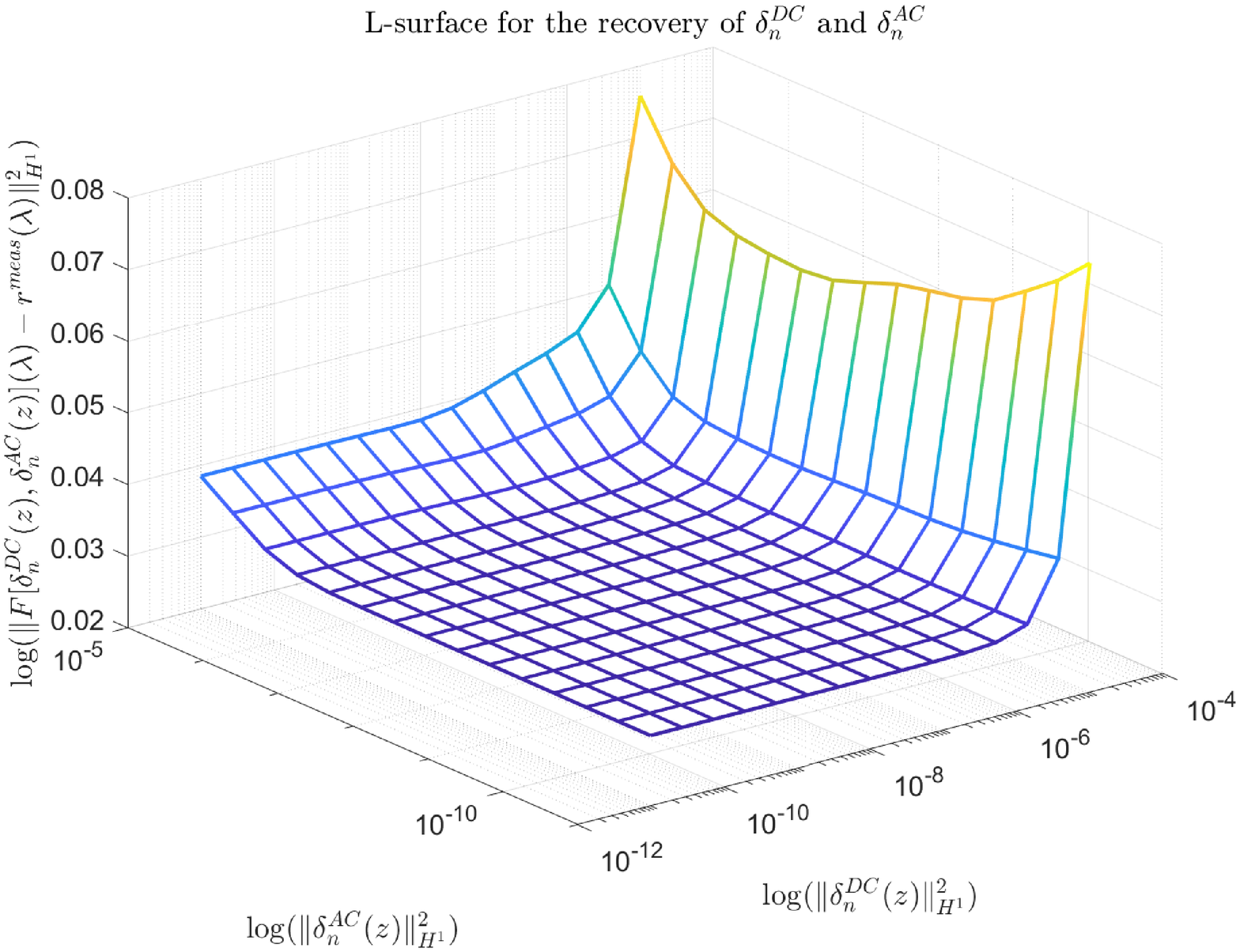}
\includegraphics[width=0.49\linewidth]{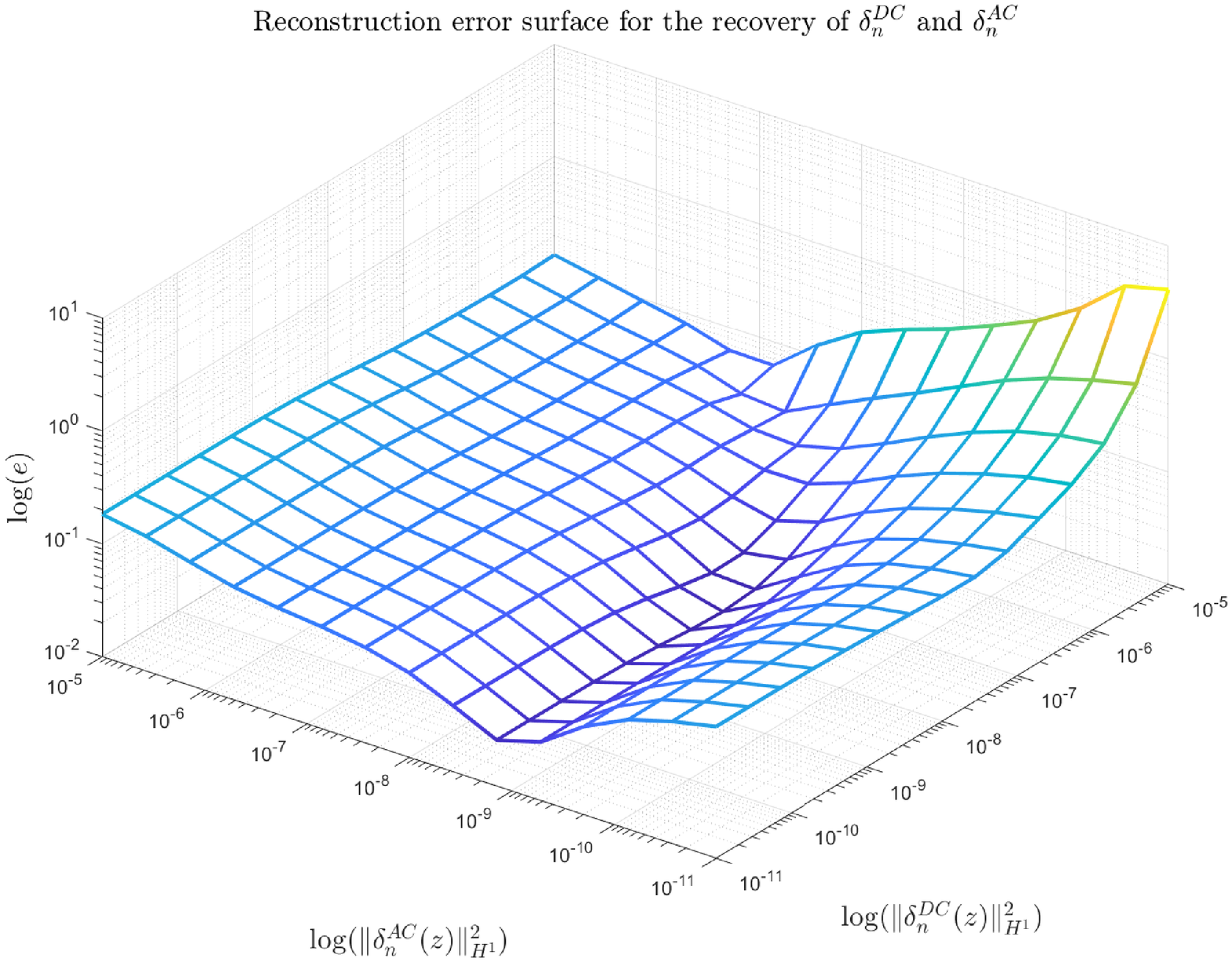}
\caption{Experiment of Section \ref{ssec:apodized}, L-surface (left) and reconstruction error (right). X- and y-axis correspond to the regularization parameters. The L-surface chooses the point of highest curvature. A comparison with the recosntruction shows that the actual minimum is far away from the corner of the L, instead the minimum lies in a flat plateau of the L-surface. Hence, this heuristic fails.}\label{fig:Lsurf}
\end{figure}

\begin{figure}
\includegraphics[width=0.32\linewidth]{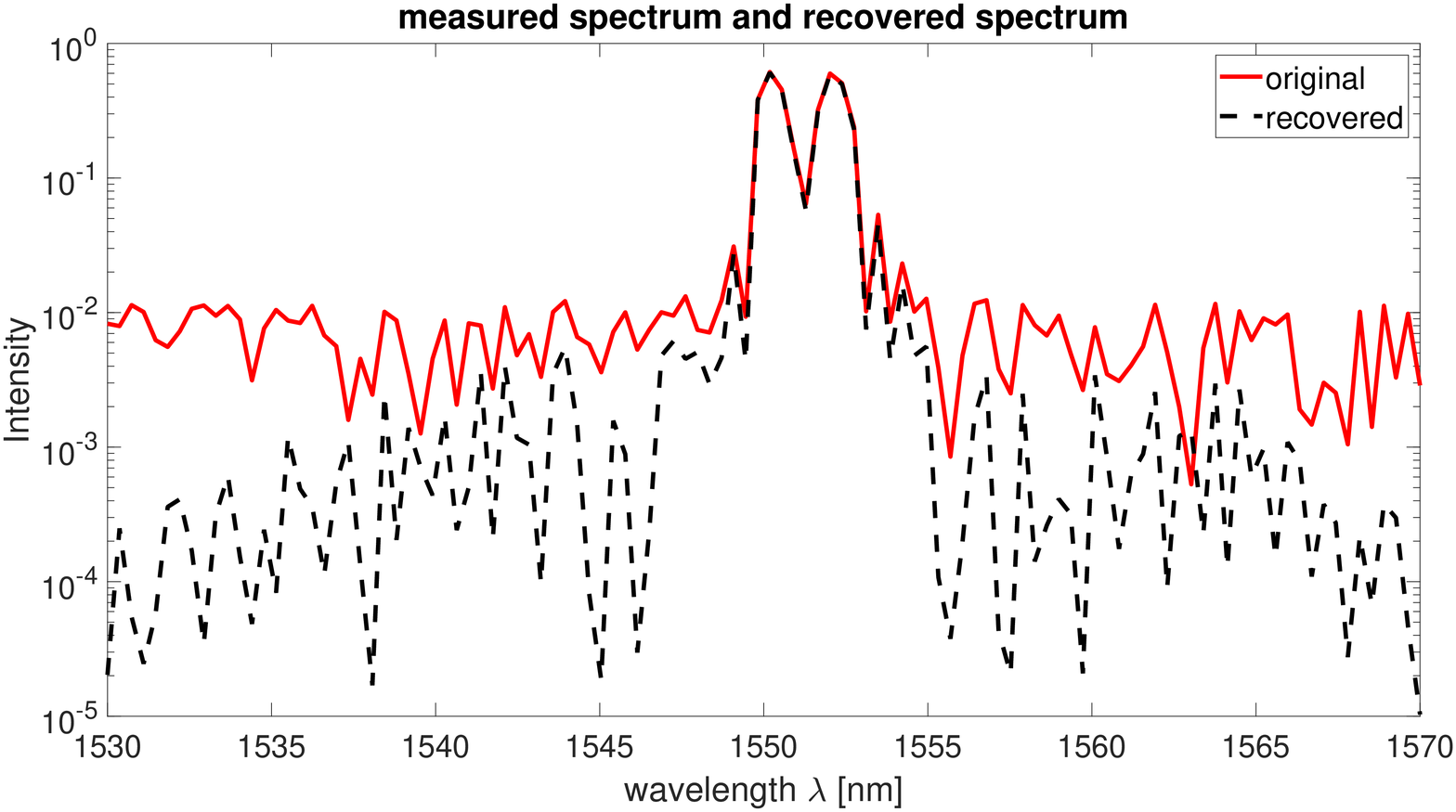}\includegraphics[width=0.32\linewidth]{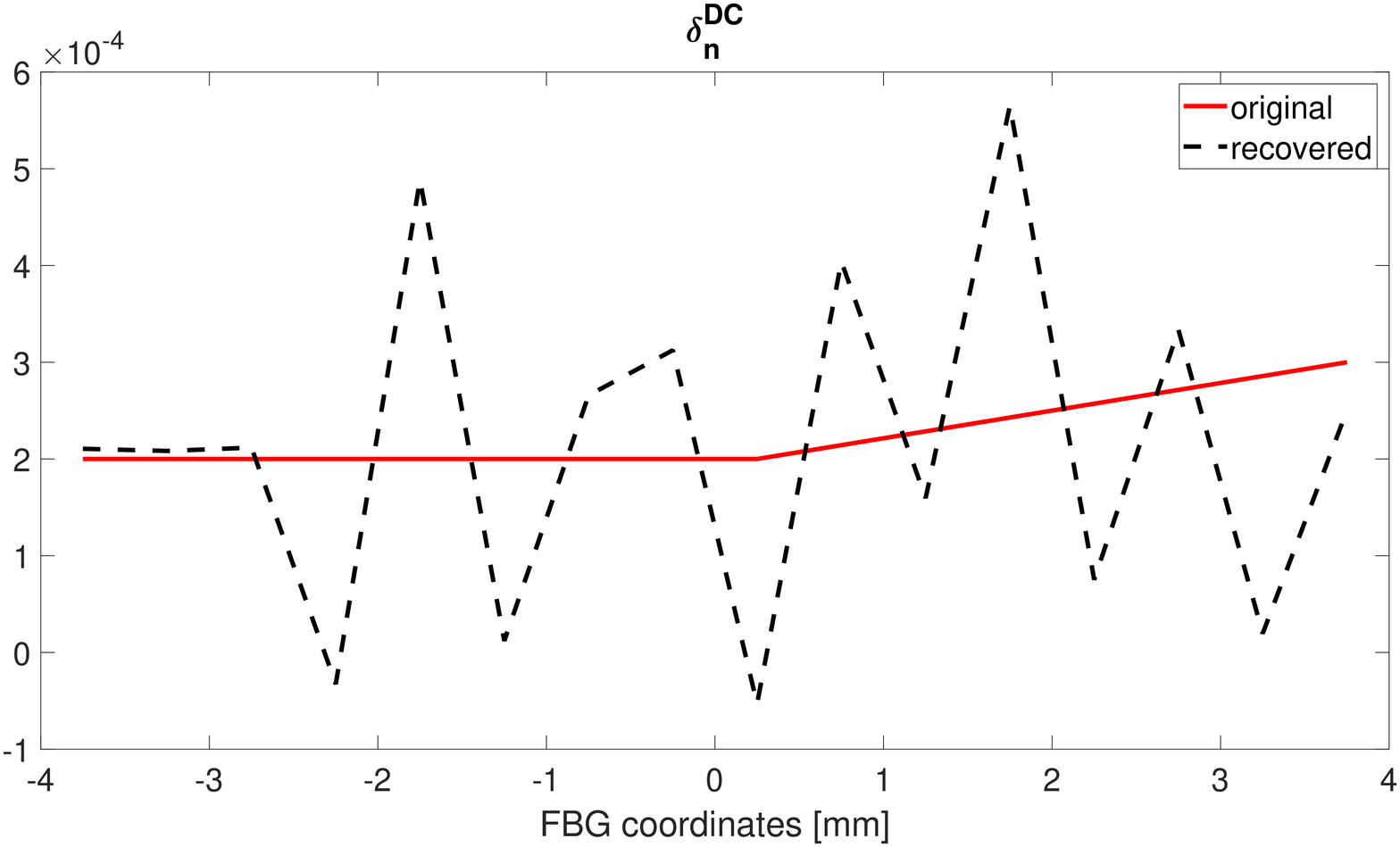}\includegraphics[width=0.32\linewidth]{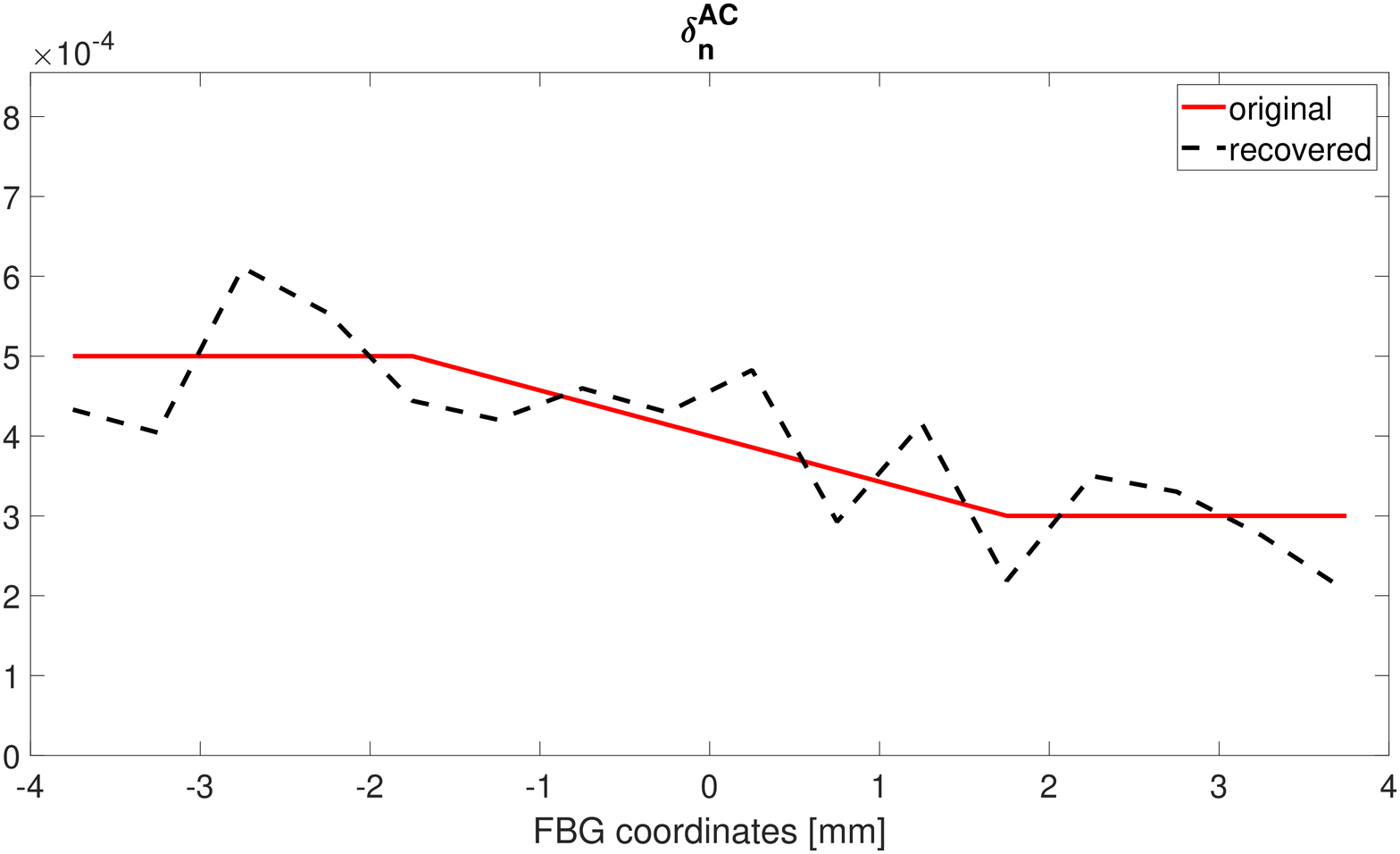}\\
\includegraphics[width=0.32\linewidth]{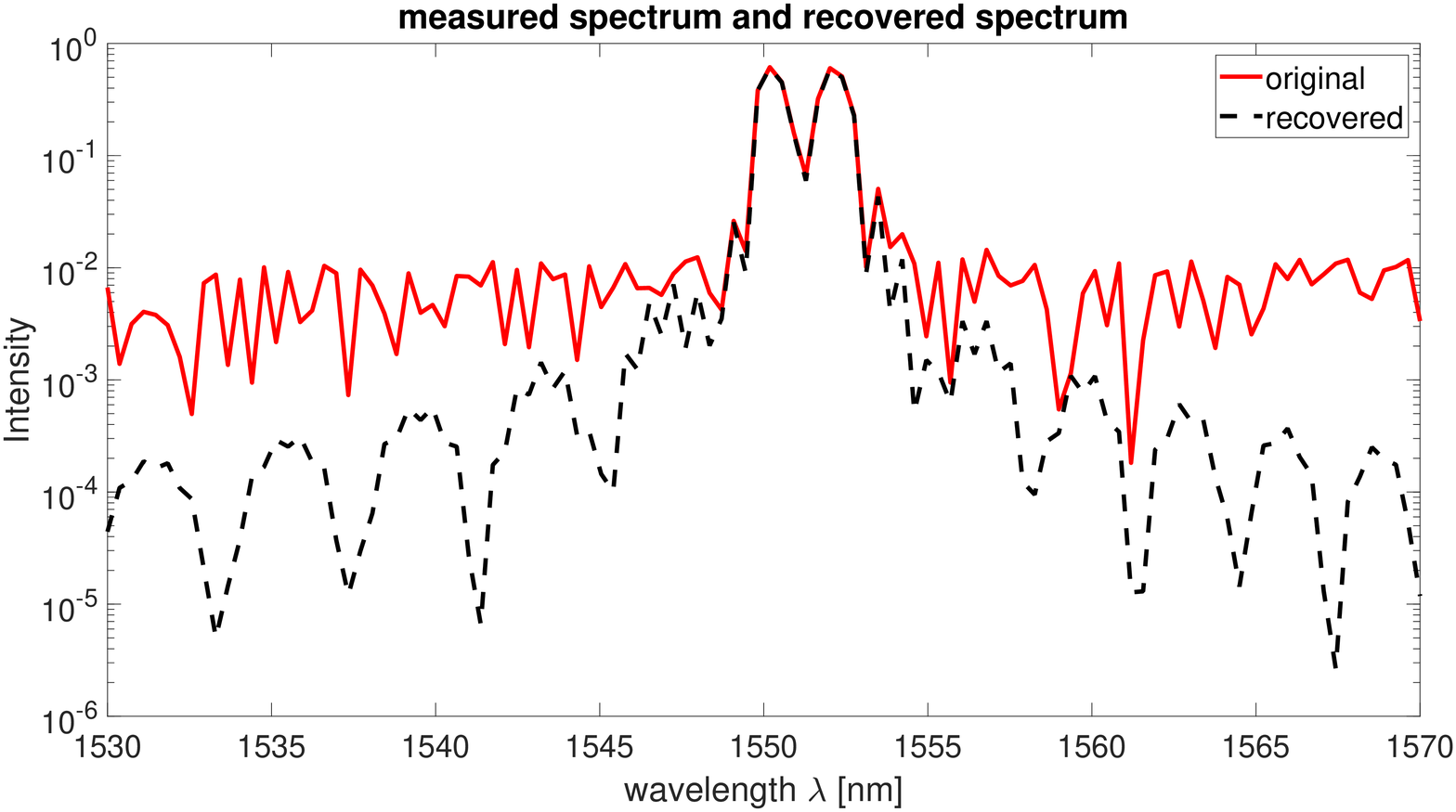}
\includegraphics[width=0.32\linewidth]{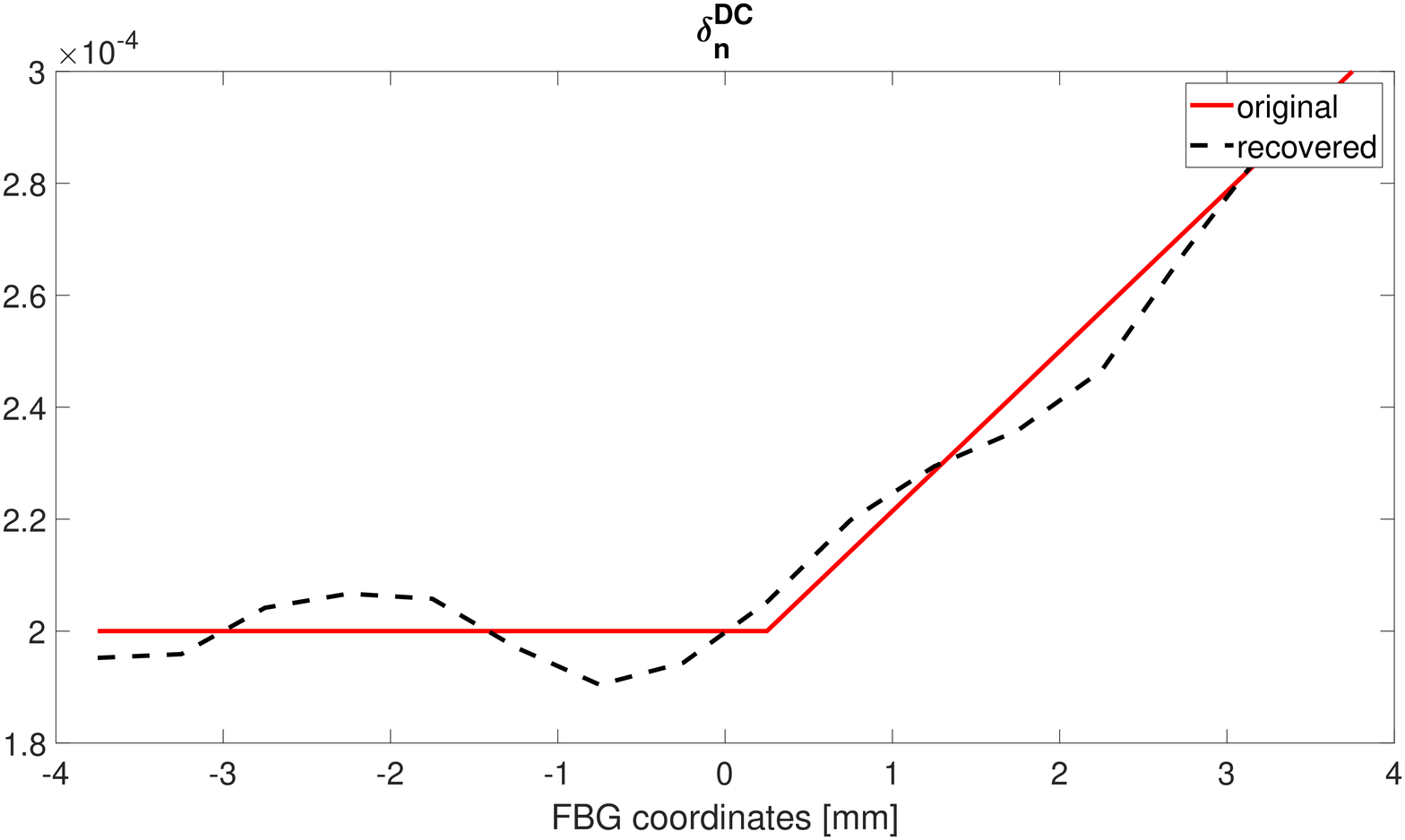}
\includegraphics[width=0.32\linewidth]{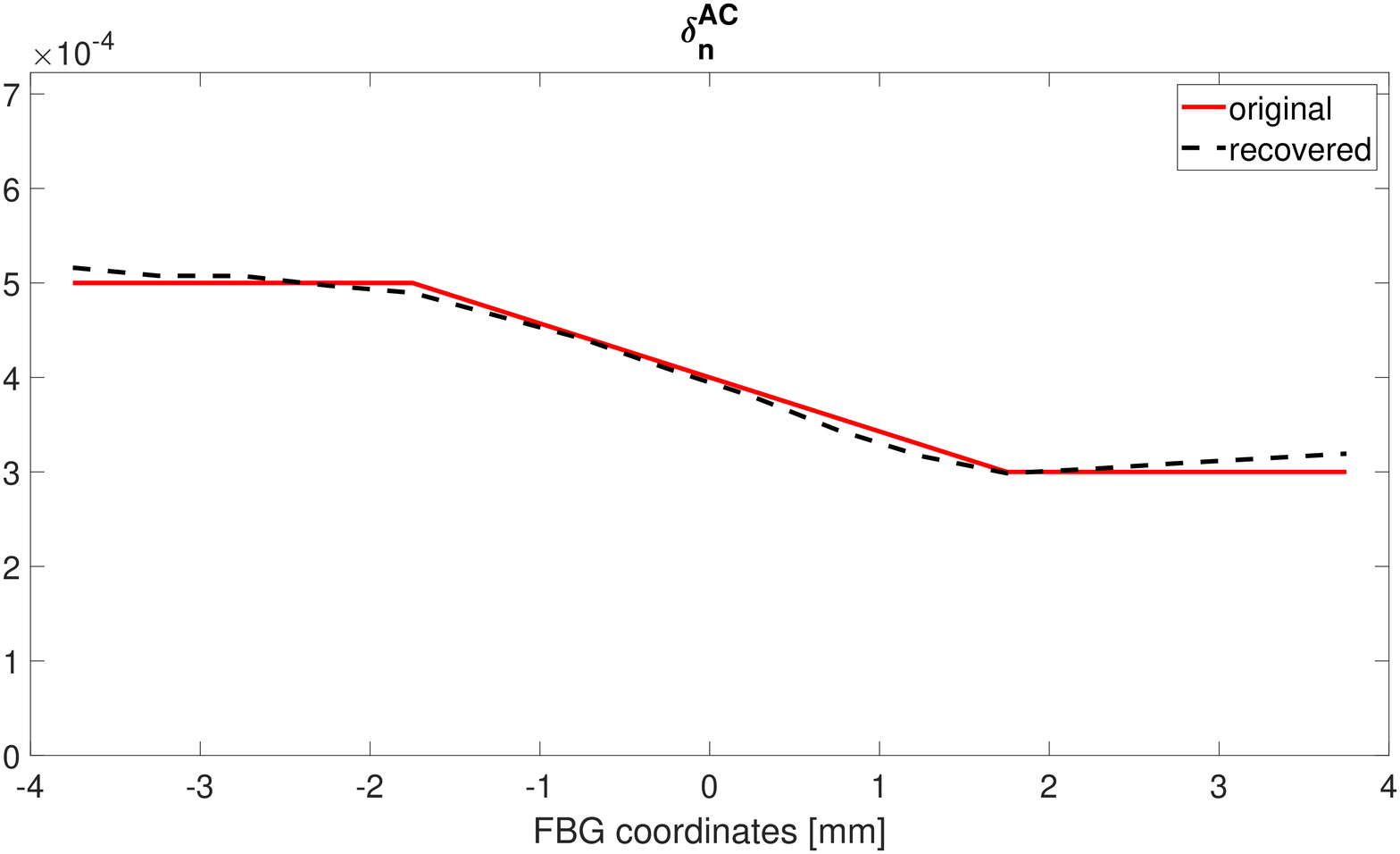}
\caption{Experiment of Section \ref{ssec:apodized}. Recovery of $\dndc$ and $\dnac$ from simulated spectra with $1\%$ relative Gaussian noise. Top row: no regularization, bottom row: regularized solutions with $\alpha=10^{-8}$ and $\beta=10^{-9}$. In both cases the spectra (left column) fit well. The unregularized solutions, however, are highly oscillating and incorrect. Appropriate regularization can overcome this issue and yields a good approximation of the exact solution.}
\label{fig:dndcdnac}
\end{figure}

\subsection{Chirped FBG} \label{ssec:chirp}

So far, we have not considered the FBG period modulation $\Phi$ as an unknown. 
We turn to this case in this section, where for now we let the other parameters $\dndc$ and $\dnac$ be known. To the best of the authors' knowledge, this is the only case for which an inversion of the entire FBG-spectrum has previously been attempted. 
Indeed, in \cite{GillPeters2004} a genetic algorithm was presented that could approximate the local grating period sufficiently well. 
We remark that the authors of \cite{GillPeters2004} used a slightly different encoding of local period changes than done in this paper. 
We also remark that genetic algorithms, and the closely related evolution-type algorithms have to be treated with utmost care when used to solve inverse problems, see the extensive study in \cite{DGMBI} in another context of phase retrieval under a nonlinear forward operator,

In addition the calibration of chirped FBGs (cf. Figure \ref{fig:FBGtypes}), the determination of $\Phi$ can  be directly related to axial strain along the fibre direction. 
Specifically, one has
\begin{equation}\label{eq:phi-strain}
\Phi(z)=\frac{-2\pi z}{\Lambda}\frac{(1-p_\epsilon)\epsilon_z(z)}{(1+(1-p_\epsilon)\epsilon_z(z)}
\end{equation}
where $\epsilon_z$ is the axial strain and $p_\epsilon$ the strain-optic coefficient of the FBG \cite{GillPeters2004,peters2001embedded}. 

We remark that the the period modulation $\Phi$ enters the governing ODE system \eqref{eq:coupledDE_erdogan} only via its derivative $\Phi^\prime$ (cf. the parameter $\hat\sigma$ in \eqref{eq:sigmahat}). Hence, we reconstruct the function $\Phi^\prime$ and obtain $\Phi$ itself via numerical integration using the trapezoidal rule.

As in the previous section a direct recovery approach without regularization fails, as highly oscillating functions produce almost identical spectra as the true reconstruction. 
To suppress this, we employ again a Tikhonov-type approach
\begin{equation}\label{eq:tikh_phi}
[{\Phi^\prime}^\ast(z)]=\argmin_{\Phi^\prime\in(\R^{16}\times \R^{16})} \|F[\Phi^\prime](\lambda)-\rmeas(\lambda)\|_2^2+\gamma \|\Phi(z)^{\prime\prime}\|_2^2
\end{equation}
and obtain $\Phi$ by integration,
\begin{equation}\label{eq:phirec}
\Phi_c(z)=\int_0^z \Phi^\prime(z)\,dz+c.
\end{equation}
If the integration constant $c$ is determined, the strain $\epsilon(z)$ can be calculated from \eqref{eq:phi-strain} as
\begin{equation} \label{eq:epsrec}
	\epsilon(z) = \frac{\Phi_c(z)\Lambda}{(p_\epsilon-1)\Phi_c(z)\Lambda-2\pi z(1-p_\epsilon)}.
\end{equation}
The determination of an integration constant typically relies on some additional information, which in general is not available in our setting. 
Fortunately, we observed a kind of self-calibration effect that can be used to find $c$. 
Indeed, by default we obtain $\Phi_0$ from \eqref{eq:phirec} with $c=0$. 
We then calculate the strain $\epsilon$ according to \eqref{eq:epsrec} for a range of $\Phi_c$, $c\in\R$, and with this the norms $\|\epsilon(z,c)\|_2$ as function of $c$. 
This function has a single minimum whose location coincides with the desired integration constant, see Figure~\ref{fig:phi_c} for examples with noise-free and noisy data..

\begin{figure}
\includegraphics[width=0.49\linewidth]{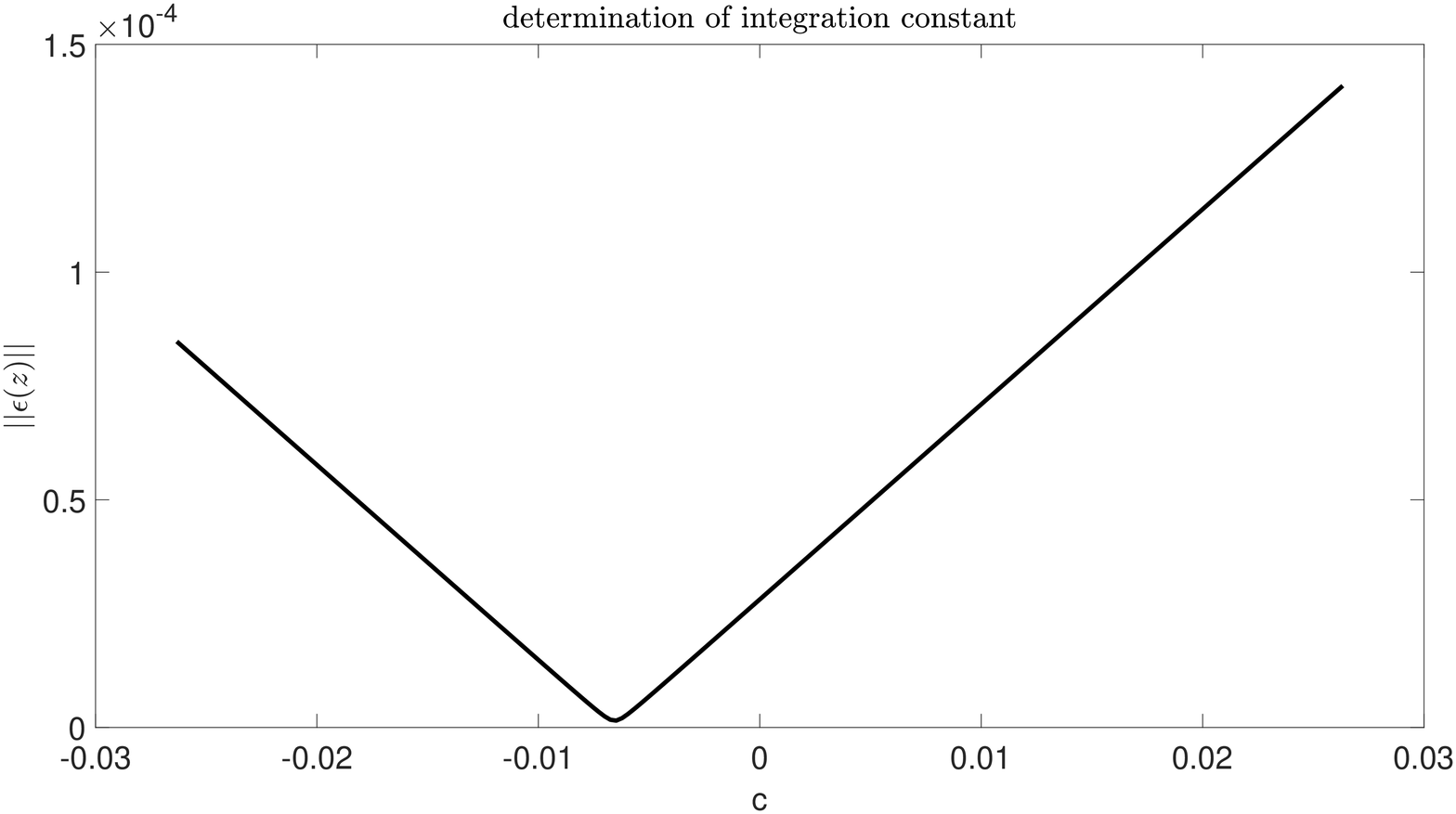}
\includegraphics[width=0.49\linewidth]{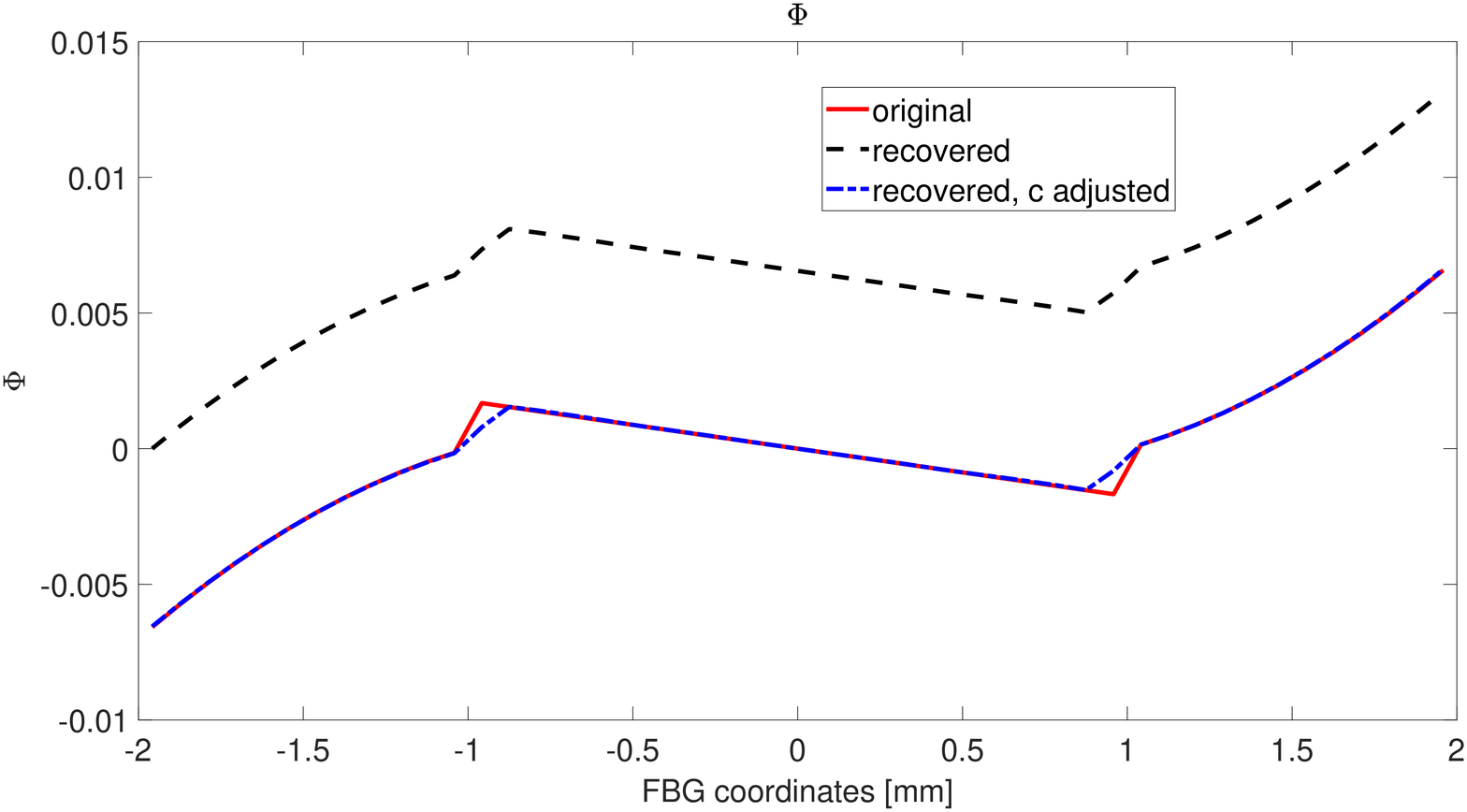}\\
\includegraphics[width=0.49\linewidth]{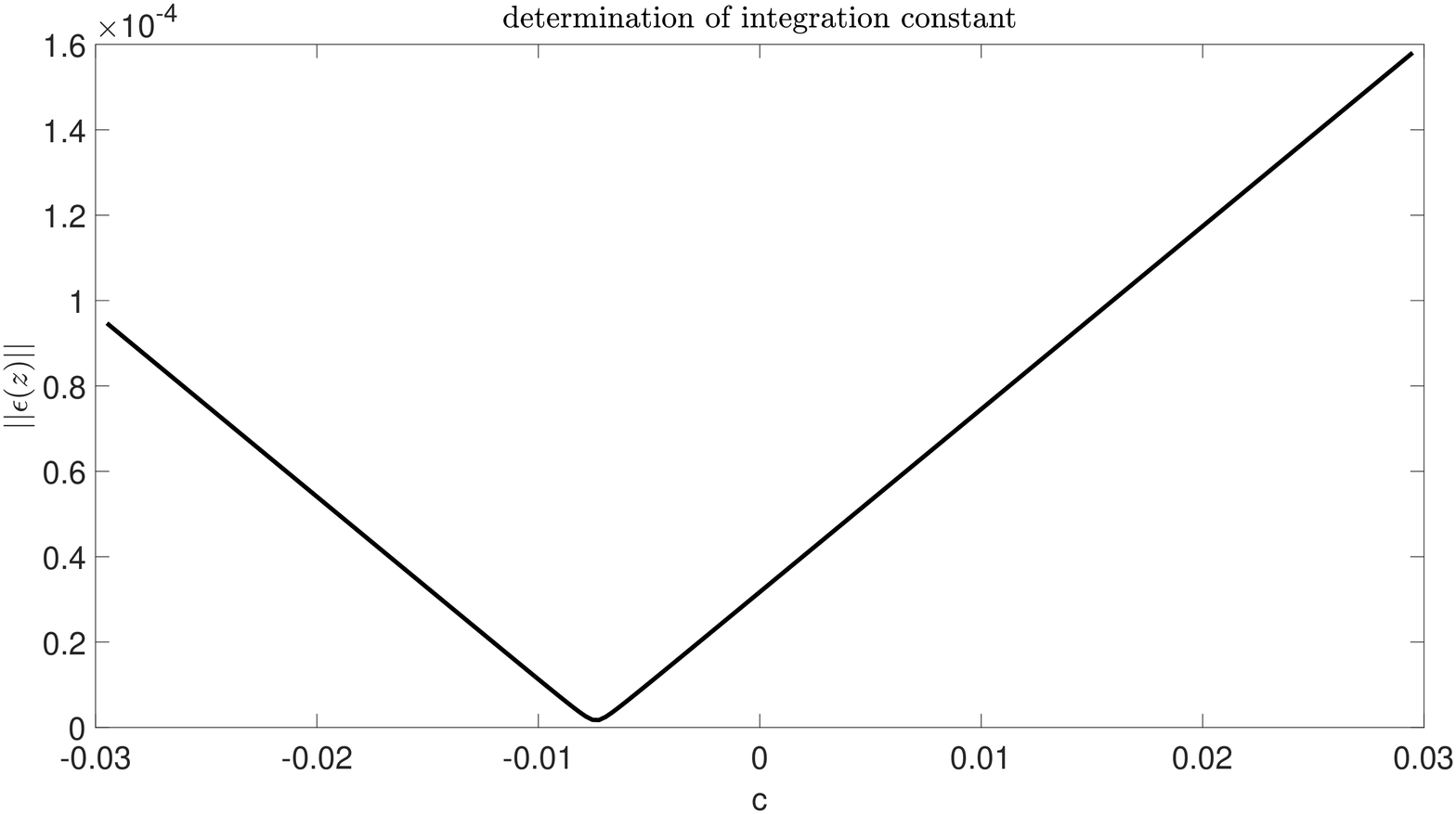}\includegraphics[width=0.49\linewidth]{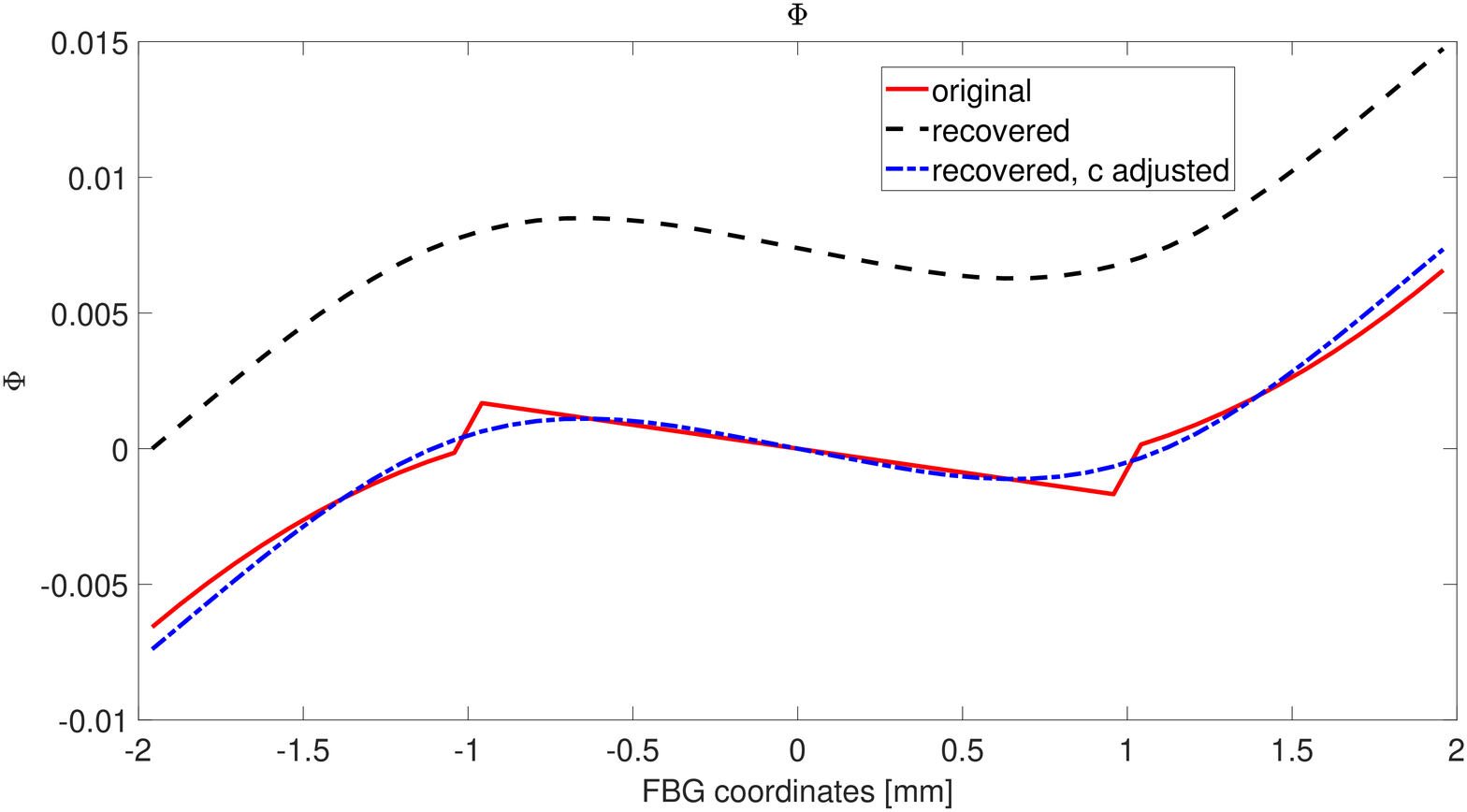}\caption{Estimation of the integration constant in \eqref{eq:phirec} without data noise (top row) and with 1\% data noise (bottom row). Left column: We observe that the norm of $\epsilon$ is a function of the integration constant in \eqref{eq:phirec}. The minimizer corresponds to the correct phase $\Phi$ obtained from \eqref{eq:epsdef} via \eqref{eq:phi-strain}. Right column: A comparison of the true function $\Phi$, the straight-forward solution $\Phi_0$, and the result with $c$ corresponding to the minimum in left plot.}\label{fig:phi_c}
\end{figure}

To present an example for this scenario we attempted to reproduce Example~3 from \cite{GillPeters2004}. 
There the strain function over an FBG of length $L=4\mathrm{mm}$ is defined as
\begin{equation} \label{eq:epsdef}
	\epsilon(z)
	=
	\begin{cases} 
		0.4\cdot10^{-3}z+0.4\cdot10^{-6}  & -0.002\leq z<-0.001\\ 
		0.2\cdot10^{-6}                   & -0.001\leq z\leq 0.001\\ 
		-0.4\cdot10^{-3}z+0.4\cdot10^{-6} & \;\;\; 0.001< z\leq 0.002
	\end{cases}.
\end{equation}
This strain function and its reconstruction with the method described above, as well as the simulated and recovered spectra, are shown in Figure \ref{fig:epsresults_nonoise} for noise-free data and in Figure \ref{fig:epsresults_noise} for noisy data. 
It is important to note that, comparing our spectral data with that in \cite{GillPeters2004}, it appears that our wavelength discretization is significantly coarser (we use an equidistant grid with $0.175\mathrm{nm}$ spacing). 
On the other hand, we split the FBG into $48$ subgratings for the transfer matrix approach compared to $8$ in the original example. 
In particular, we obtain a spatial strain resolution of $83\mathrm{\mu m}$ from only one FBG sensor. 
Similarly to the last section, we solve \eqref{eq:tikh_phi} for several values $\gamma$. 
In the absence of noise in the data, only a small amount of regularization is required. 
We obtain the best reconstruction for $\gamma=1\cdot 10^{-13}$; see Figure~\ref{fig:epsresults_nonoise}. 
We then repeat the experiment with 1\% Gaussian noise added to the data. 
We still obtain  good approximation of the unknown strain $\epsilon$, see Figure~\ref{fig:epsresults_noise} for $\gamma=9.5\cdot10^{-7}$. 
However, the approximations are sensitive with respect to the regularization parameter $\gamma$, and similar to the results in the previous section, it remains an open problem to find an automated way to determine an appropriate regularization parameter. 
Several established heuristic parameter selection rules have been tested, including the quasi-optimality criterion \cite{QO}, the L-curve method \cite{Hansen1993}, the heuristic discrepancy principle \cite{EngHanNeu96}, and the simple L-curve \cite{KindermannRaik}. 
The reason for this failure is likely the nonlinearity of the forward operator, as heuristic parameter selection rules are often developed for linear forward operators; as well as the previously noticed flatness of the residuals w.r.t.\ the regularization parameters. 
In Figure~\ref{fig:phi_res:errors}, we show residuals $\|F[\Phi^\prime]-y^\text{meas}\|$ and reconstruction errors in phase $\|\Phi_\text{rec}-\Phi^\dag\|$ and strain $\|\epsilon_\text{rec}-\epsilon^\dag\|$. 
It can be seen that without simulated data noise, the residuals fall continuously with decreasing $\gamma$ until they reach a flat plateau. 
The minimal reconstruction error is to be found in the region of descending residuals. 
However, under noisy data this is no longer the case, as the residuals are almost unchanged for the range of regularization parameters shown. 
Also the reconstruction errors in $\Phi$ and $\epsilon$ are flat, and it is difficult to locate the minima. Again, this indicates the need for adaptive parameter selection strategies, which remains a topic for future research. 

\begin{figure}
\includegraphics[width=0.49\linewidth]{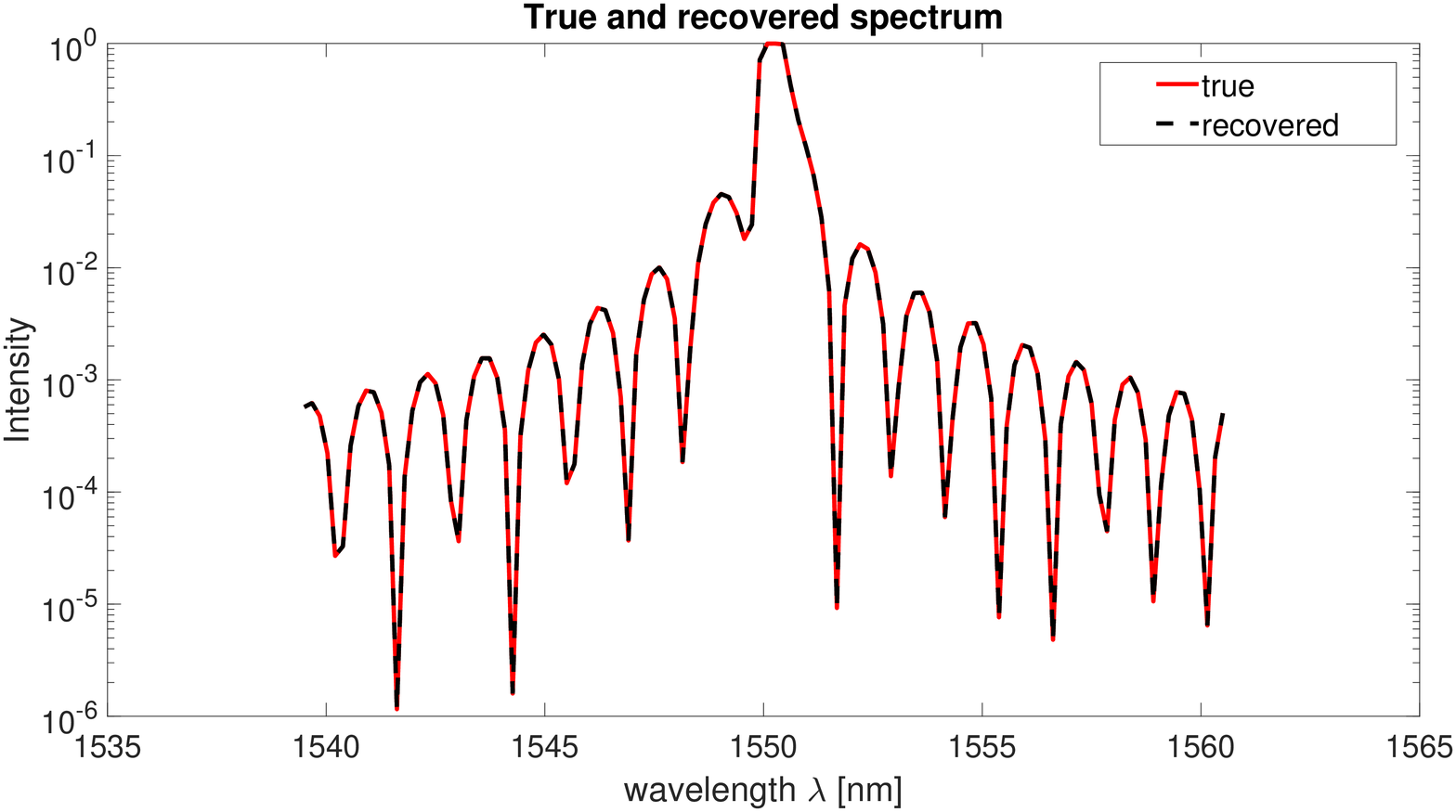}\includegraphics[width=0.49\linewidth]{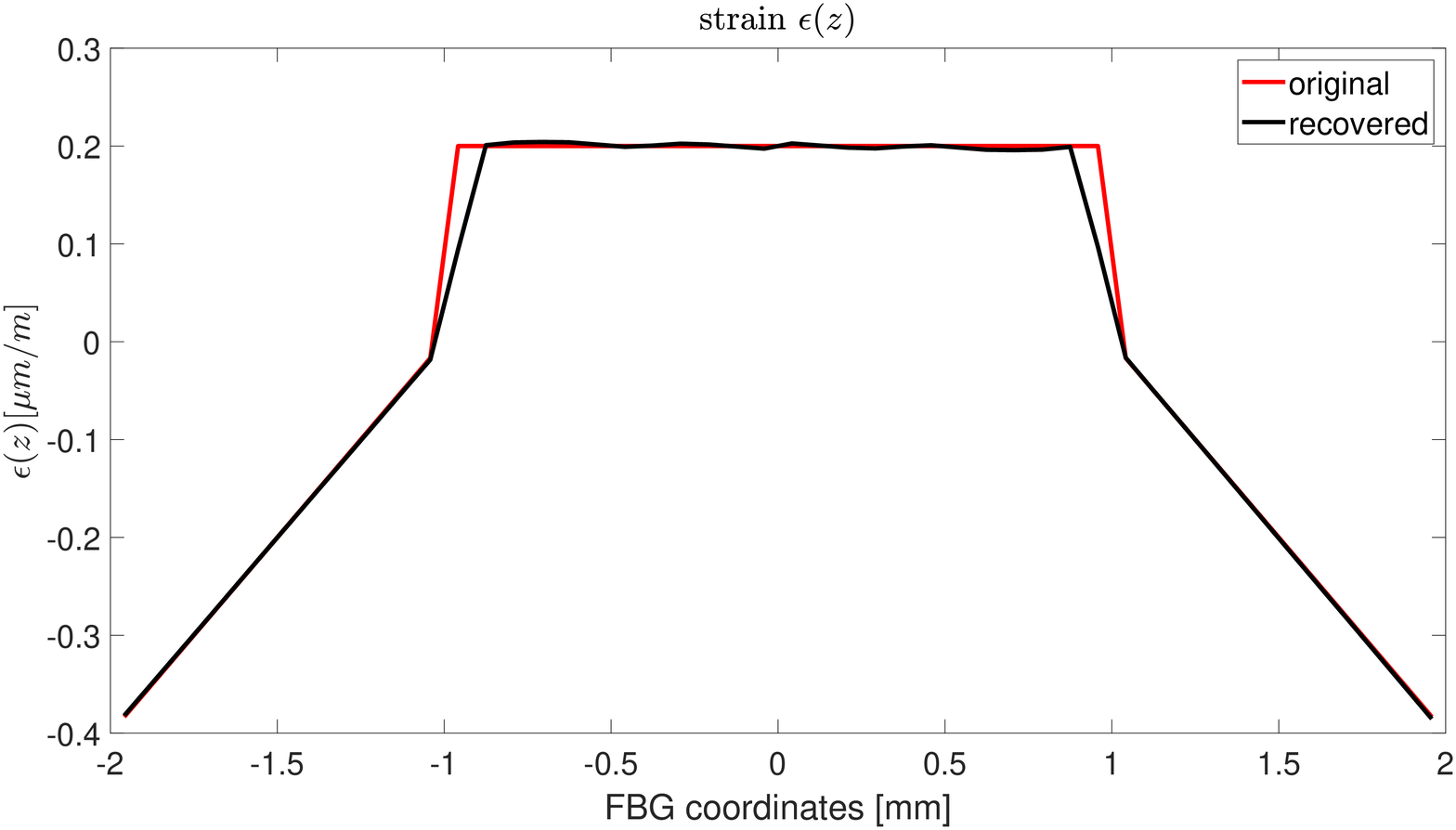}
\caption{Adaption of \cite[Example 4.3]{GillPeters2004}. Noise-free data. 
Left: Simulated and reconstructed spectrum. 
Right: true and recivered strain.}
\label{fig:epsresults_nonoise}
\end{figure}

\begin{figure}
\includegraphics[width=0.49\linewidth]{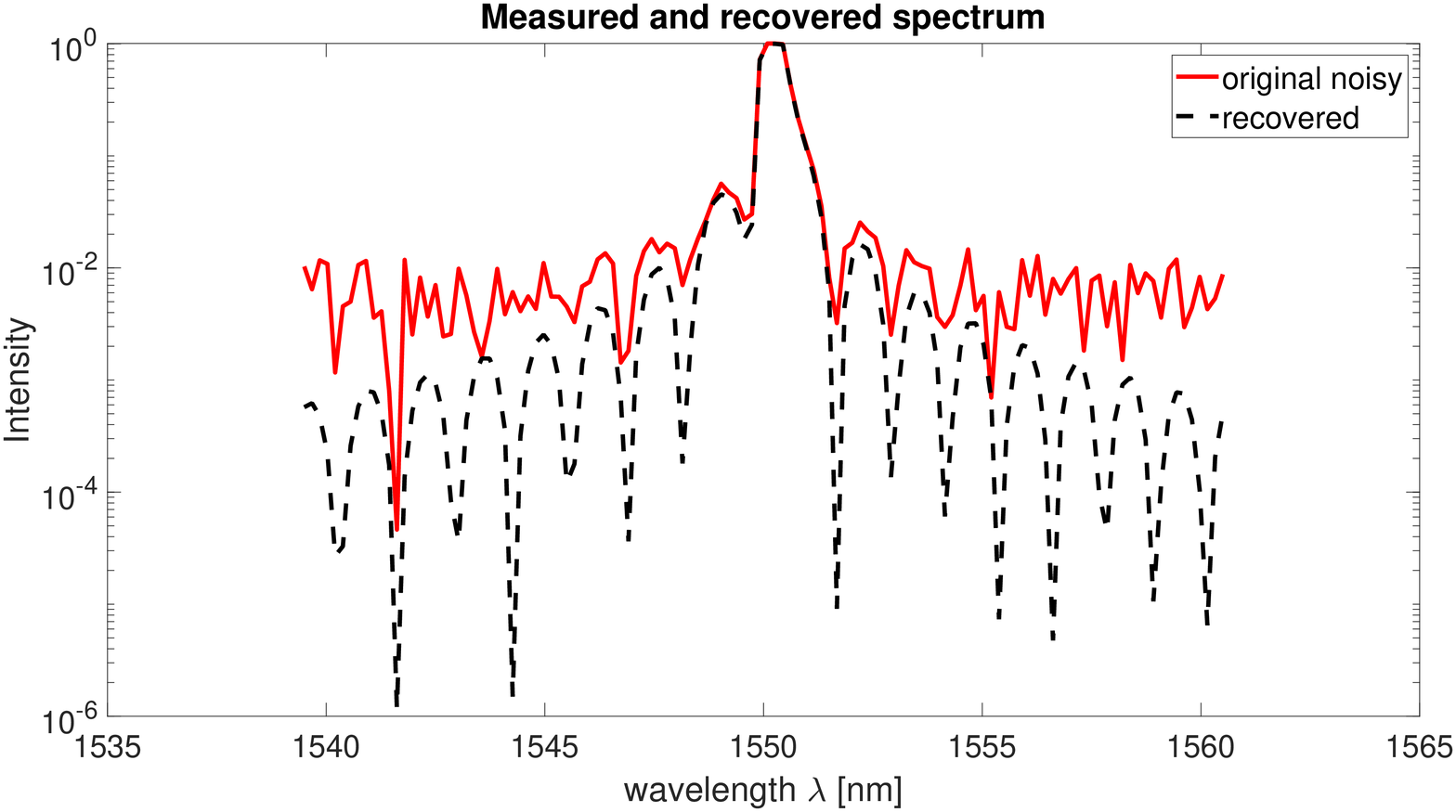}
\includegraphics[width=0.49\linewidth]{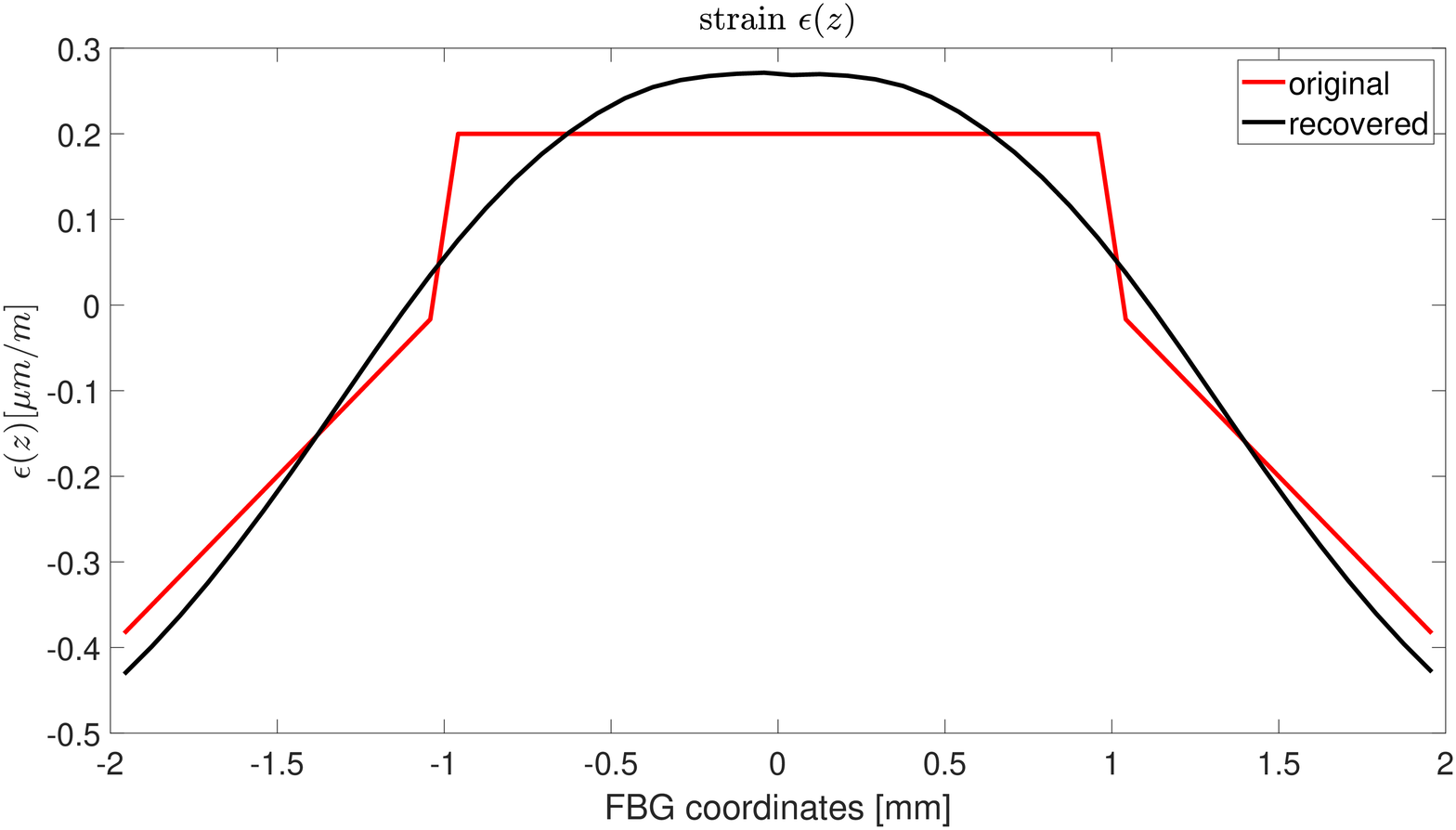}
\caption{Adaption of \cite[Example 4.3]{GillPeters2004}. $1\%$ noise added to the simulated data. 
Left: Simulated and reconstructed spectrum. 
Right: true and recovered strain.}
\label{fig:epsresults_noise}
\end{figure}

\begin{figure}
\begin{tabular}{c c}
\includegraphics[width=0.49\linewidth]{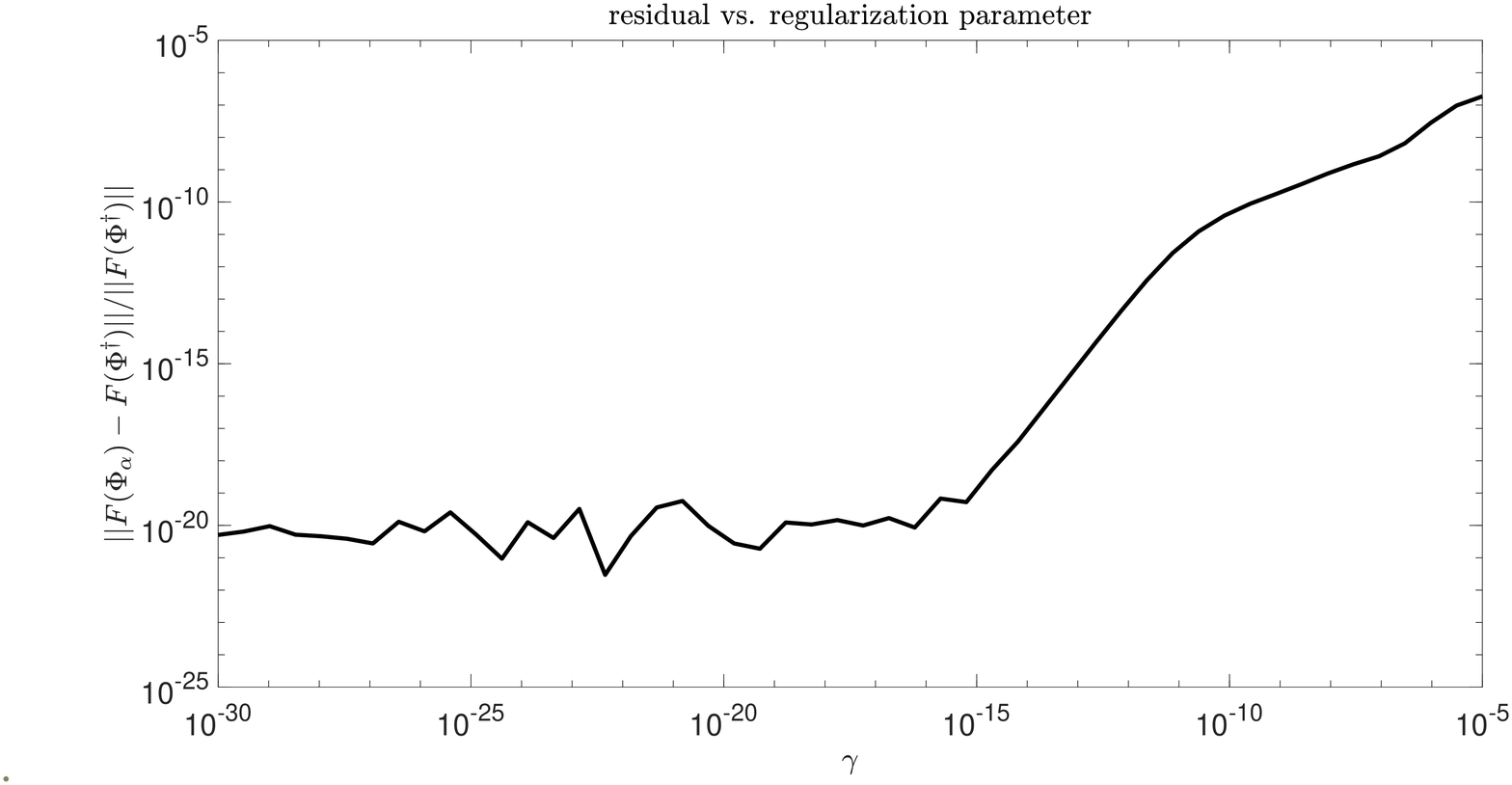} & 
\includegraphics[width=0.49\linewidth]{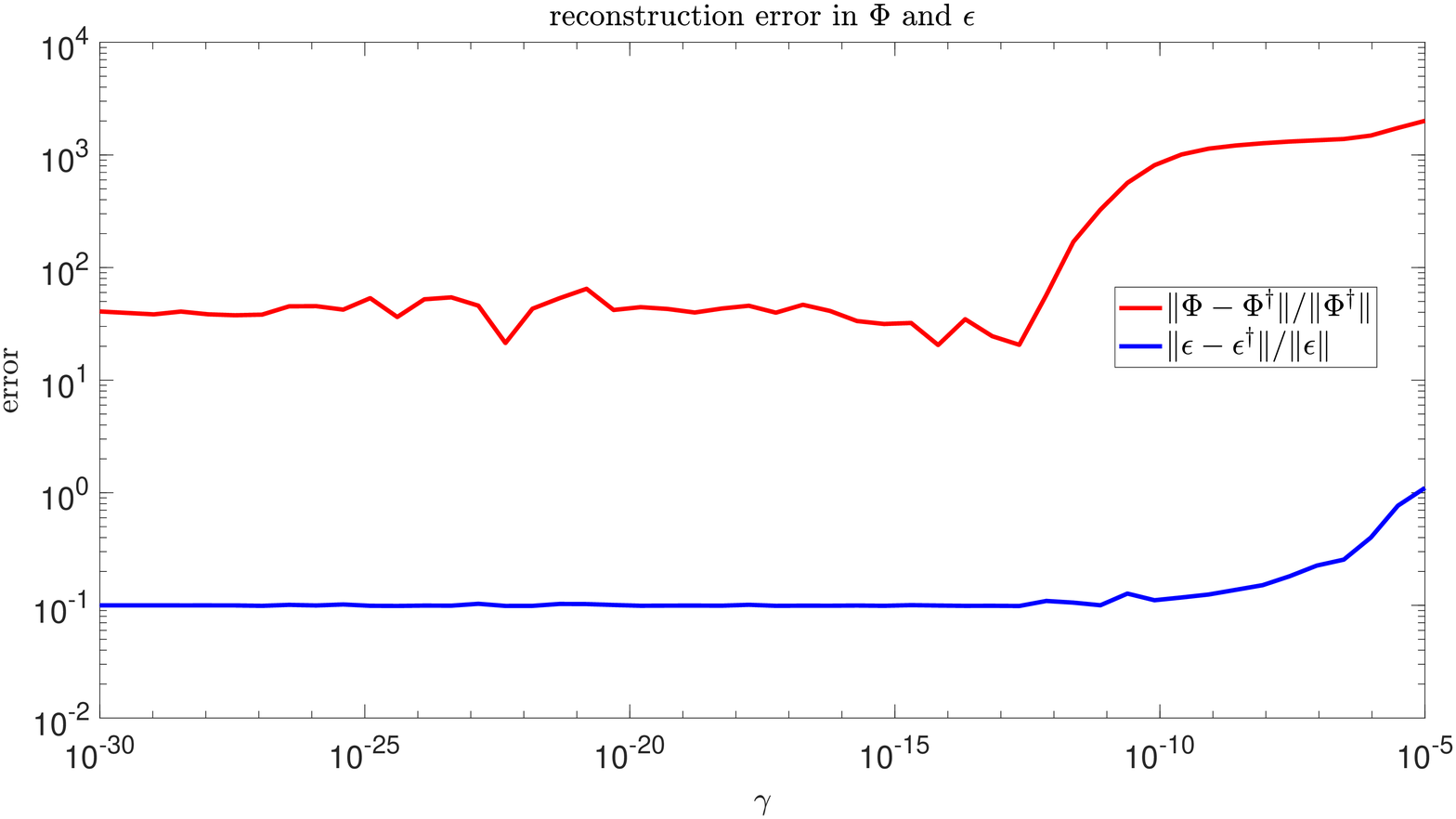} \\
\includegraphics[width=0.49\linewidth]{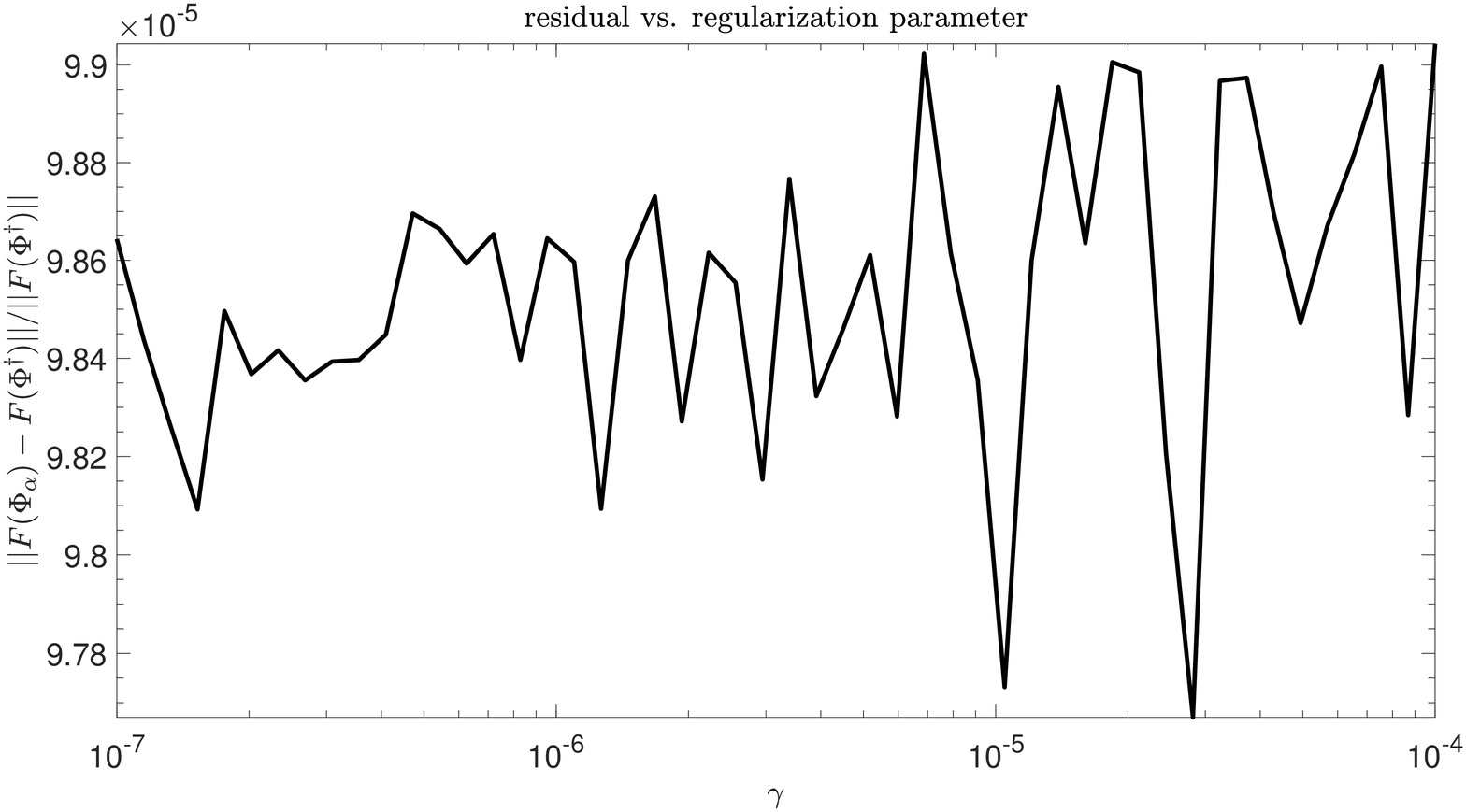} & 
\includegraphics[width=0.49\linewidth]{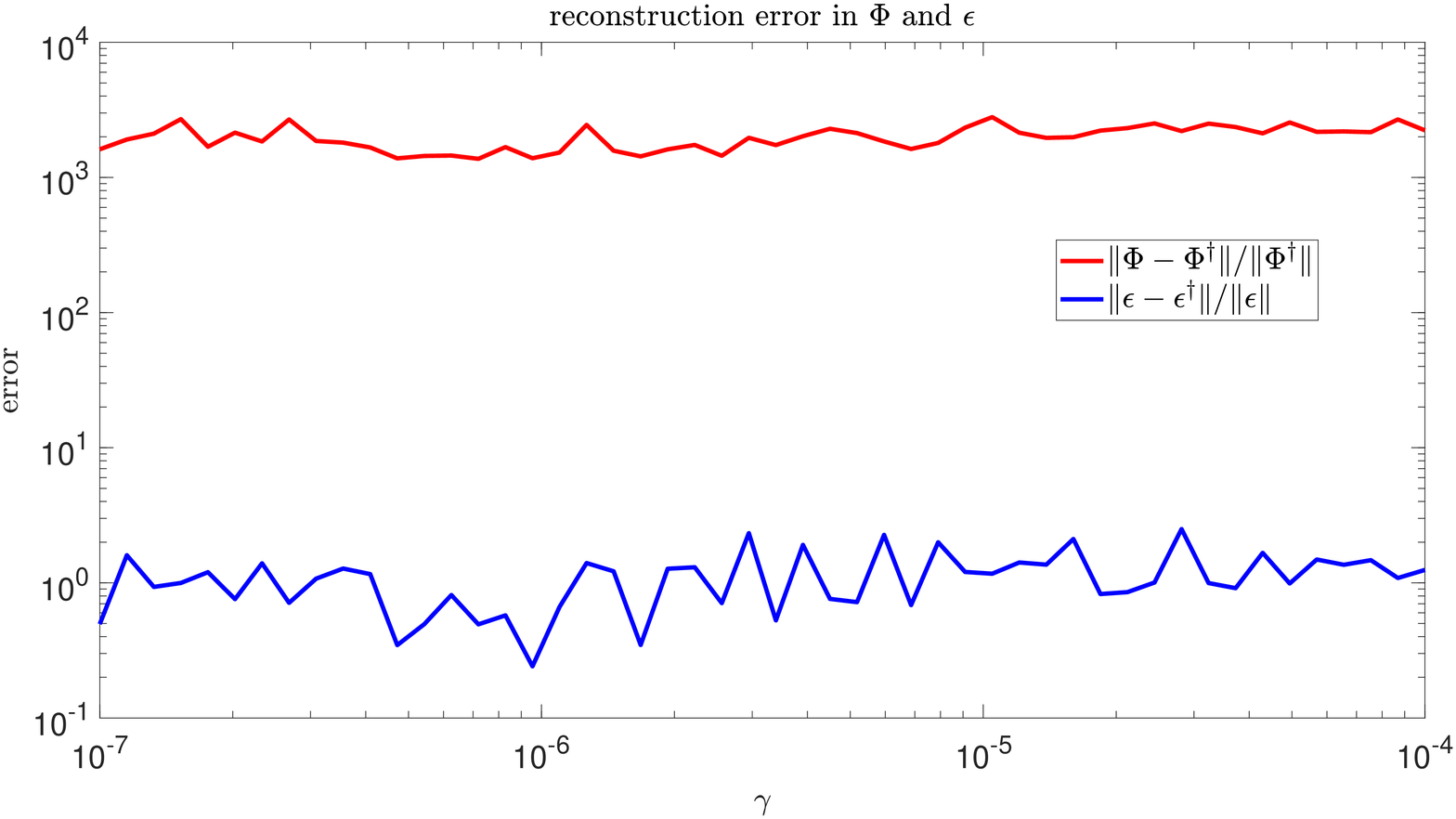}
\end{tabular}
\caption{Residuals (left column) and reconstruction errors (right column) for noise-free data (top row) and with 1\% noise (bottom row) plotted against the regularization parameters $\gamma$. 
The vertical line indicates the least reconstruction error in the strain. 
Except for large values of $\gamma$, the curves are on one hand highly oscillating, but on the other hand also quite flat. 
Hence, well established heuristic parameter selection rules appear not to be applicable for finding an appropriate value of $\gamma$.}
\label{fig:phi_res:errors}
\end{figure}

In summary, we have demonstrated that it is possible to recover a strain distribution over the length of an FBG by minimizing a Tikhonov-type functional with a spatial resolution below 100$\mu$m. The reconstruction quality depends strongly on the choice of the reconstruction parameter, and it remains an open problem to find a suitable rule for its selection.

\section{Experiments} \label{sec:experiments}

In this section we demonstrate the regularization approach with real measured data. 
We begin with the simplest test setting, a homogeneous deformation of the FBG. 
When using FBG sensors, a narrowband peak is a basic requirement for subsequent strain determination. 
A narrowband peak allows a main reflected wavelength to be determined. 
Various methods have been proposed in the literature for this purpose \cite{Hoffmann2008Grundlagen}, e.g., by the centroid algorithm, the centre wavelength, a spline reconstruction or the full-width-at-half-maximum method (FWHM).

This reflected peak wavelength can now be determined at different points in time. 
Based on a change of the reflected wavelength, different physical quantities can be directly inferred. 
In our case this is the mechanical strain, which is obtained by solving for $\varepsilon_z$ in \eqref{eq:strain}, giving
\begin{equation} \label{eq:strain2}
	\varepsilon_z = \frac{1}{K} \frac{\Delta \lambda_B}{\lambda_B} .
\end{equation}

This principle works well if the reflected peak is narrowbanded. 
In the literature two main effects are described where the reflected peak loses its narrowbanded form. 
These two effects, for which the peak widens or splits, are transversal and inhomogeneous loading of the sensor. 
Transversal loading (birefringence) is described in \cite{Guemes2002} and \cite{Lawrence1999}, see also Remark \ref{rem:biref}. 
This case is important for complex loading cases in complex components or reinforced materials, especially unidirectional reinforced materials. 
Currently this is not included in our model.

The second main effect is the inhomogeneous loading of a FBG sensor. 
This inhomogeneous case can occur in different situations. 
Foremost, the loading of a FBG sensor depends on the load case. 
Another source are geometric inhomogeneities, which also lead to inhomogeneous loading. 

\subsection{Experimental Setup and Specimen Design} \label{ssec:expSetupDesign}

In this paper we have intentionally created a geometric inhomogeneity to study the behaviour of FBG sensors and to test the inversion approach. 
Specifically, an aluminium bending beam was selected with linear-elastic material behaviour. 
The aluminium beam is loaded in a 4-point bending test, shown in Figure~\ref{fig:Versuchsaufbau}, as this produces an analytically well identifiable stressing in the region between the inner bearings, where the shear force is zero and the bending moment is constant.

In order to investigate the required magnitude of the inhomogeneity for producing an effect on the reflected peak, different specimens were produced, which are displayed in Figure~\ref{fig:Kerbgeometrie}. 
Different notches were manufactured in the aluminium beams. 
The geometry of these notches was determined by finite element simulation in Ansys. 
The notches were designed in such a way as to produce a specific characteristic of the strain along the FBG measuring length. 
This resulted in three variants of the test specimen. 
The first specimen shows a small strain gradient. 
The second specimen generates a medium strain gradient at the eccentric sensor of 30\,\% strain difference over the sensor length. A large strain gradient is produced by the third specimen, which has a strain difference at the eccentric sensor of 70\,\% over the sensor length. 

\begin{figure}\centering
	\includegraphics[width=1.0\linewidth]{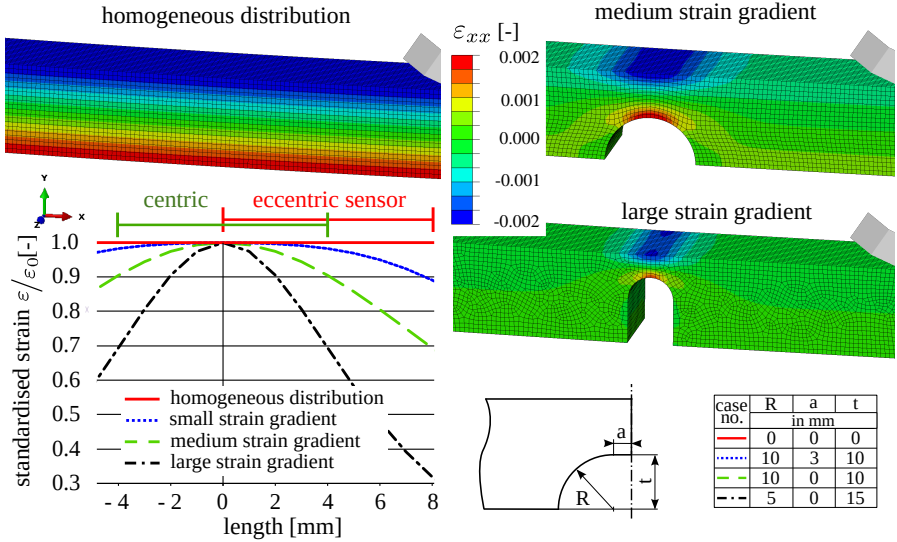}\caption{Different notch designs are identified to generate different strain gradients.}\label{fig:Kerbgeometrie}\
\end{figure} 

\begin{figure}\centering
	\includegraphics[width=1.0\linewidth]{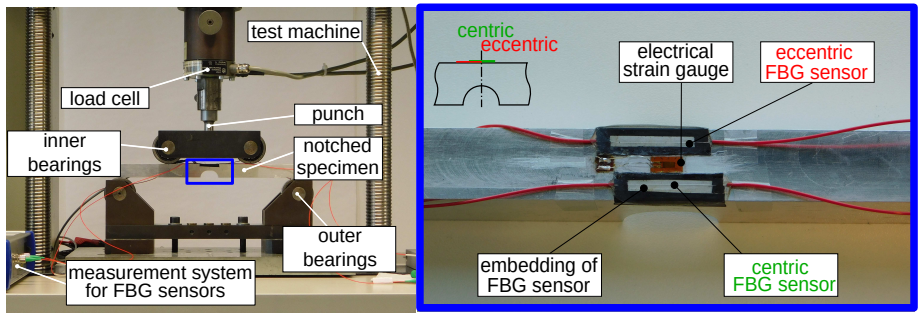}
	\caption{Experimental Setup.}\label{fig:Versuchsaufbau}\
\end{figure}

The left panel of Figure~\ref{fig:Versuchsaufbau} shows the experimental setup of the four-point bending test.  In this experiment, a 100 kN ZwickRoell\textsuperscript{\textregistered} testing machine is used.
Besides a force controlled introduction of a preload, the loading is applied displacement controlled. 
Thereby, the force is measured by a 5\,kN load cell.
The displacement produces an approximated averaged strain of 2000\,$\mu m/m$ on the surface of the specimen. 
For this purpose, the punch is pressed onto a carriage, which introduces the force into the specimen via two mounted rolls (inner bearings). The outer rolls represent the outer bearings and are also rotatable. This means that no lateral force is generated.  The produced, mean strain is measured by the electrical strain gauge with the MGCplus CP22 measurement system from HBM with catman AP V3.5.1.48.  

Three different sensors are applied on each specimen as depicted on the right image of Figure \ref{fig:Versuchsaufbau}. An electrical strain gauge is installed centrally above the notch and centrally at beam depth and serves as a reference sensor. The strain gauge we use, is a sensor 1-LY13-3/250 from HBM. Furthermore, two fibre Bragg grating sensors are applied. TOne edge of the so-called eccentric sensor aligns to the centre of the notch. The second, centric, sensor is centred above the notch. In this case draw tower gratings (DTG\textsuperscript{\textregistered}) from FBGS with a sensor length of 8\,mm were used. The FBG signal is measured by the FBGscan 800 from FBGS with the software program IllumiSense V.2.3.5.

Echoing the remark at the end of Section~\ref{sec:unique}, our currently available measurement setup is not well suited to determine the complex loading configurations in the experiments we carried out. 
We know from the manufacturer that the employed FBGs are homogeneous ( $\dnac=\mathrm{const}$, $\dndc\equiv 0$, $\Phi\equiv 0$) with a length of 8mm. 
Other parameters such as $\Lambda,n_\text{eff}$, or the precise value of $\dnac$ are not provided. 
Furthermore, the obtained spectra have a poor wavelength resolution of approximately 168nm, which alone is not enough for a precise characterization of the peak for the FBG without loading, since the FWHM of the peak is about 450nm. 
(In other words, the measured maximum spectrum  may deviate from the correct peak due to the coarse discretization.) 
Finally, the spectra are not scaled according to the theory (with values up to around 80 for the intensity, which should reach 1 at maximum) and exhibit some background noise at an intensity of approximately 10 for which we have no explanation and which is not consistent with the forward model. 
An example of a measured spectrum  $r^\text{meas}(\lambda)$ is shown in Figure~\ref{fig:crapdata}. 
We also show the rescaled spectrum which we use as input for the reconstruction algorithm: We take $r^\text{meas}$, divide by 100, subtract its minimum, and divide by the estimated peak value. 
To estimate the peak, we make a fit of the modified data $\tilde r^{meas}:=r^\text{meas}(\lambda)/100-\min(r^\text{meas}(\lambda)/100)$ to a Gaussian curve $c_1\exp(-\frac{(\lambda-c_2)^2}{c_3^2})$ to find $c_1,c_2$, and $c_3$, and set $\tilde r^\text{meas}\mapsto \tilde r^\text{meas}/c_1$. 
The rescaled data and an example of a homogeneous fit using the approach of Section~\ref{ssec:hom} is also shown in Figure~\ref{fig:crapdata}, which yields values for $\dnac$ and $\dndc$. 
In order to obtain this, the remaining unknowns are mostly guessed. 
We find the peak wavelength $\lambda_B$ of the unperturbed FBG and set $n_0=1.46$ which is a typical value. This yields $\Lambda_0=\frac{\lambda_b}{2n_0}$. 
Despite the many inaccuracies, we show in the following that even from this ``worst-case'' data significant information about the FBG loading can be extracted. 

\begin{remark} \label{rem:gauss}
One can use the mentioned Gaussian fit directly to assess the shift of the peak in a homogeneous setting, since $c_2$ yields the peak wavelength. 
We demonstrate later that this approach appears to be slightly better than the evaluation of the software built into the interrogator.
\end{remark}
 
\begin{figure}
\includegraphics[width=0.49\linewidth]{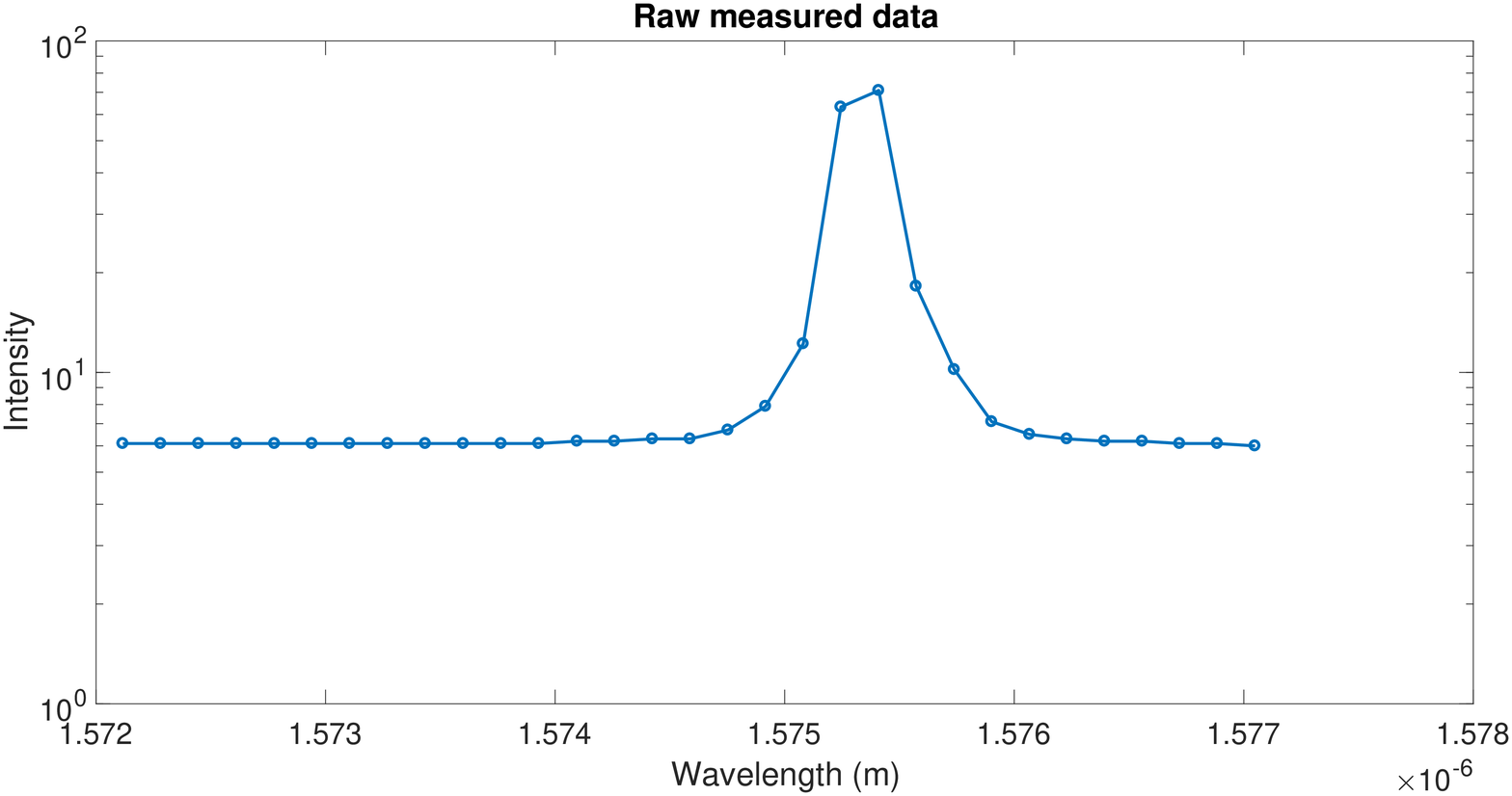}
\includegraphics[width=0.49\linewidth]{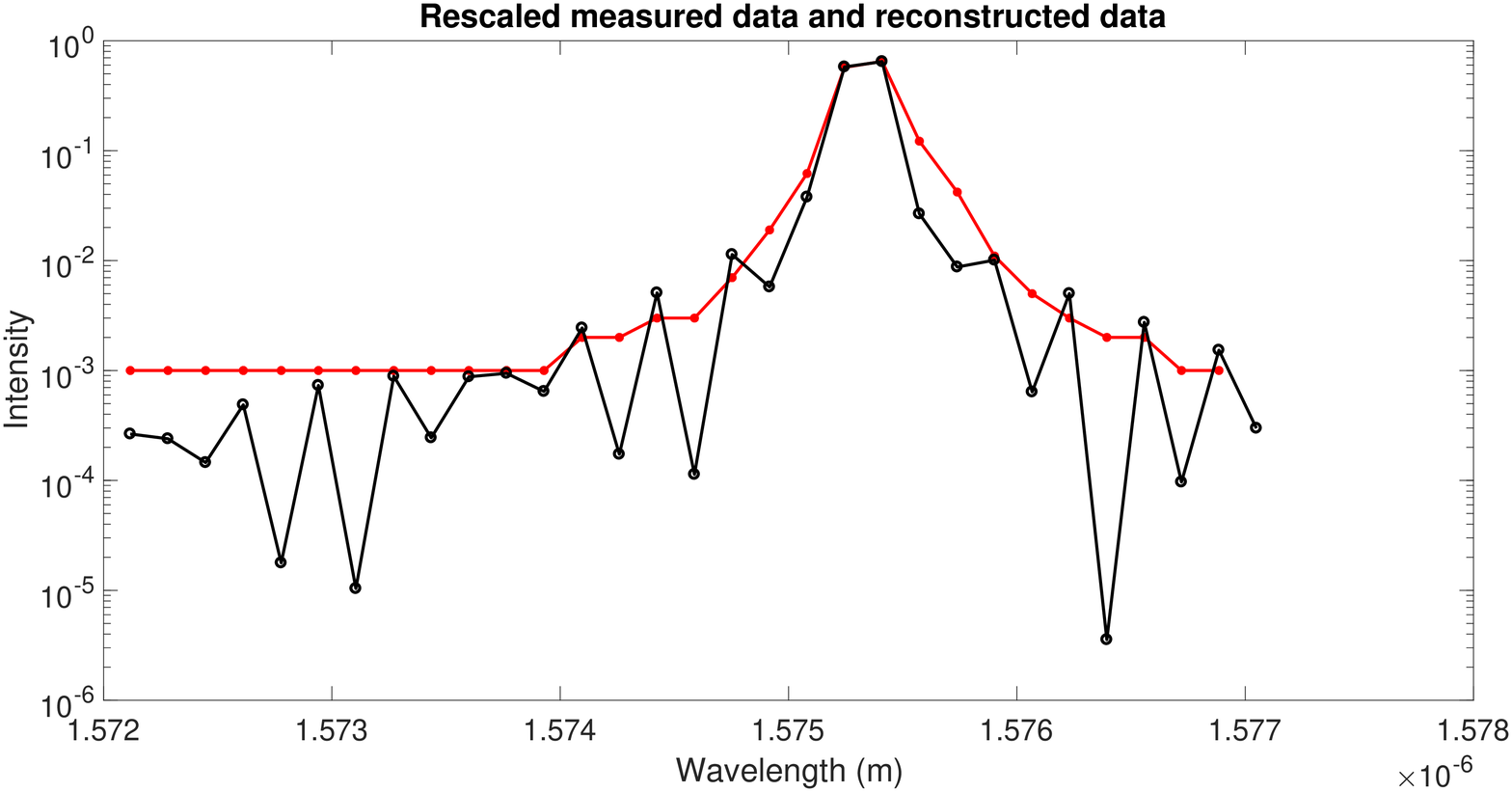}
\caption{Left: raw data marked with \textit{o}. Right: Rescaled data (red) and our fit (black). Our data carries almost no information outside of the peak, which makes a fit challenging.}
\label{fig:crapdata}
\end{figure}

\subsection{Analysis of Experimental Data: Homogeneous Beam} \label{ssec:homogen}

In our first experiment we test the homogeneous setting, with the regularization as discussed in Section \ref{ssec:hom}. 
While this is not the intended application of our method, it serves to demonstrate that the concept works, in particular with respect to the provided data. 
The results for the centric sensor are shown in Figure~\ref{fig:strain_hom}, they are similar for the eccentric sensor. 
We compare four evaluations of the strain: the strain gauge reading, the strain obtained directly from the interrogator, the strain calculated by a Gaussian peak fit (cf.\ Remark~\ref{rem:gauss}), and the strain calculated from the peak shift through the reconstruction of $\dndc$. 
All curves are close together and only differ in details. 
In the zoomed-in image, we see that the original data is not strictly linear as it appears superimposed with a sinusoidal wave of small amplitude. 
The Gaussian peak estimate is able to remove most of this sinusoidal wave and hence may be more accurate. 
The reconstructed curve is, arguably on average, closer to the strain gauge reading, but displays a distinct periodic staircasing effect. 
After several experiments, we attribute this purely to the insufficient data quality. 
Because the wavelength resolution is too coarse, there is an inherent sampling aliasing in the measurement process. 
We were able to recreate this in simulations with coarsely sampled spectra. 
Nevertheless, the resulting strain yields the lowest residuals to the measured data. 
Hence, we conclude that the algorithm itself works as intended.

\begin{figure}\centering
\includegraphics[width=0.49\linewidth]{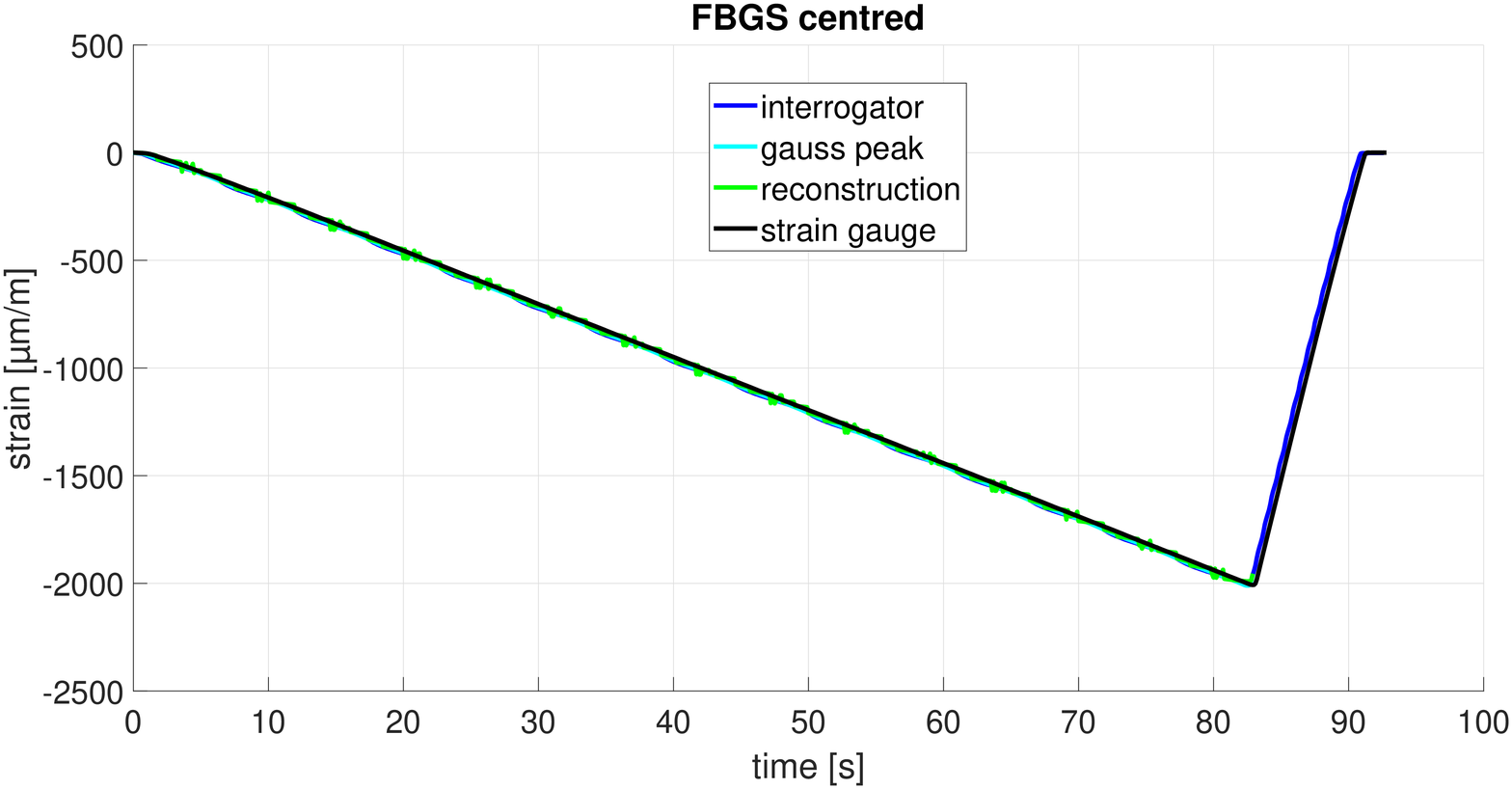}
\includegraphics[width=0.49\linewidth]{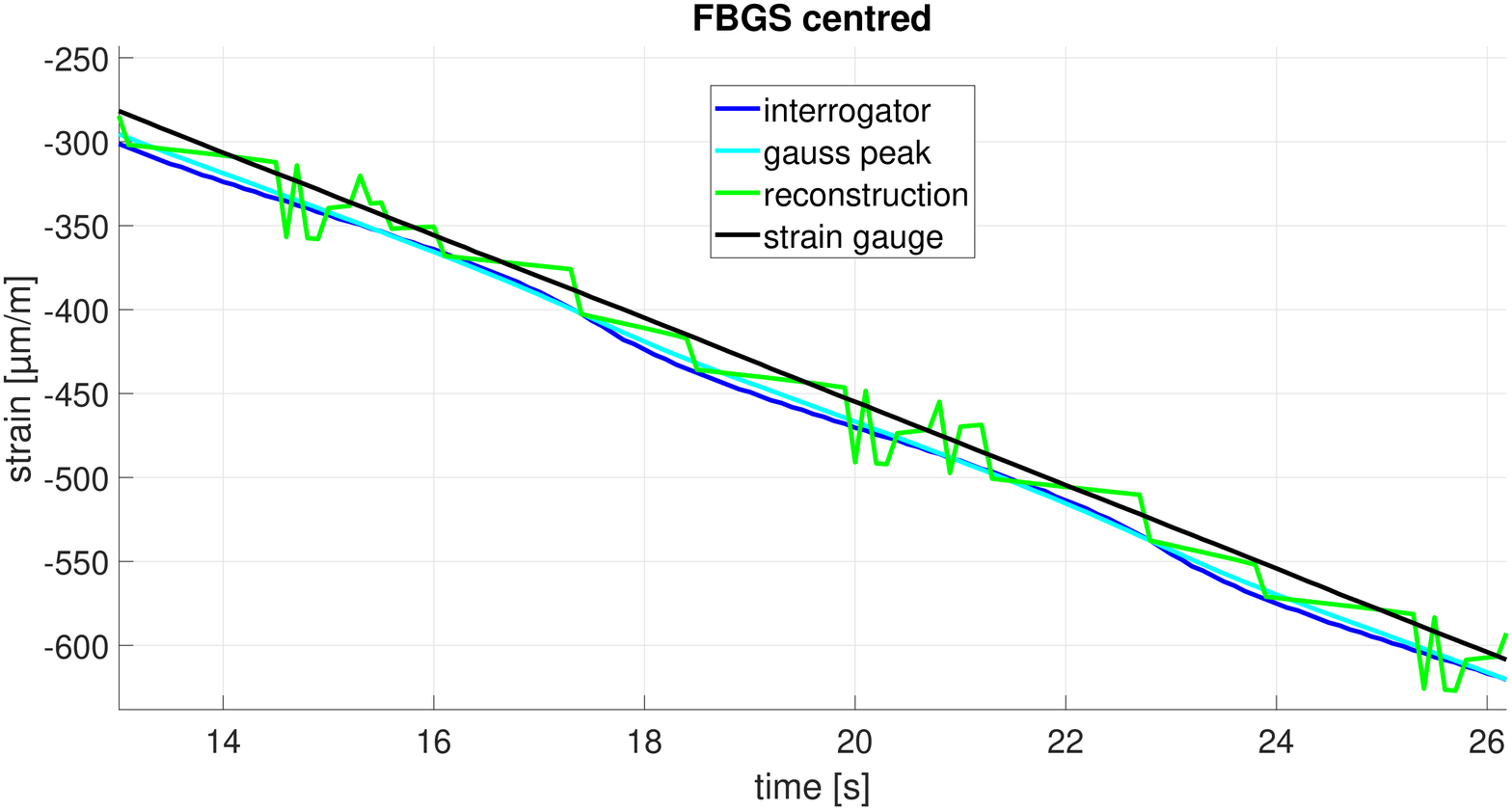}
\caption{Centred FBG under homogeneous strain. Strain gauge (black), strain measured by the interrogator (blue), strain calculated from peak shift through Gaussian fit (see Remark \ref{rem:gauss}, light blue), and strain calculated from the homogeneous reconstruction approach \ref{ssec:hom} (green). Overall all curves fit together. In the zoomed image (right), we see a staircasing effect for the reconstruction approach due to low data quality.}
\label{fig:strain_hom}
\end{figure}

\subsection{Analysis of Experimental Data: Medium Strain Gradient} \label{ssec:middle}

In this experiment we simulate a slight inhomogeneity in the strain applied to the FBG caused by an inserted notch. 
The notch produces a stress concentration which leads to an inhomogeneous deformation state. 
As seen in Figure~\ref{fig:spektren_4er}, the eccentric sensor still sees an (almost) homogeneous loading, and the spectra only show a slight loss in peak height. 
The centric sensor, on the other hand, displays spectra that are considerably wider and of lower peak intensity for larger applied strain values. 
This means the homogeneous model becomes inaccurate. 
To illustrate this, we show in Figure~\ref{fig:strain_mittel} the strain reconstruction obtained from the peak shift in the same way as in the homogeneous case of the previous section. 
Now neither of the FBG evaluation methods follows the strain gauge, simply because the model is incorrect. 
We also added an example of a spectrum and its reconstruction with the homogeneous FBG model, which clearly no longer fit.

\begin{figure} \centering
	\includegraphics[width=1\linewidth]{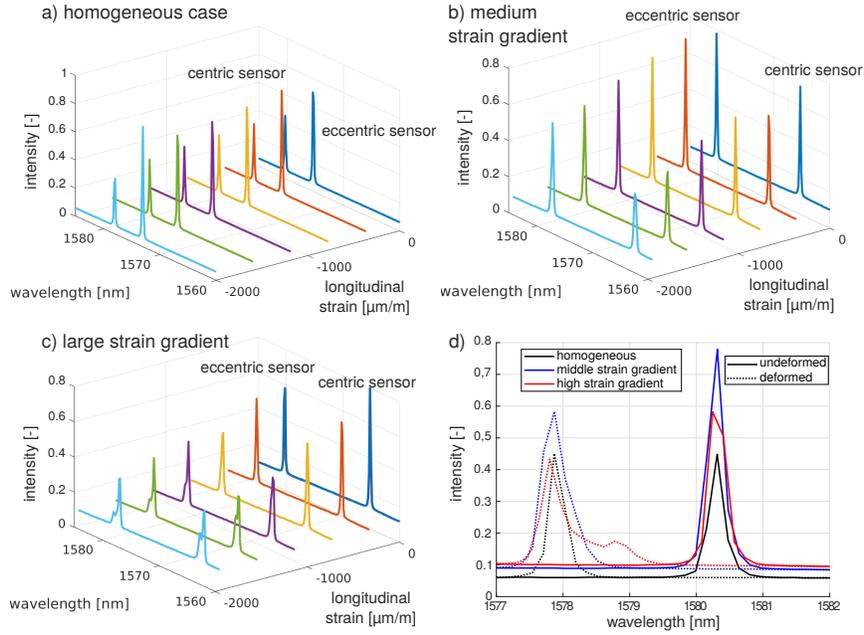}\caption{Demonstration of the different measure wavelength spectra over the loading (a, b, c). In part d, the displacement of the reflected peak and the different end formation is shown.}\label{fig:spektren_4er}
\end{figure} 

\begin{figure}\centering
\includegraphics[width=0.49\linewidth]{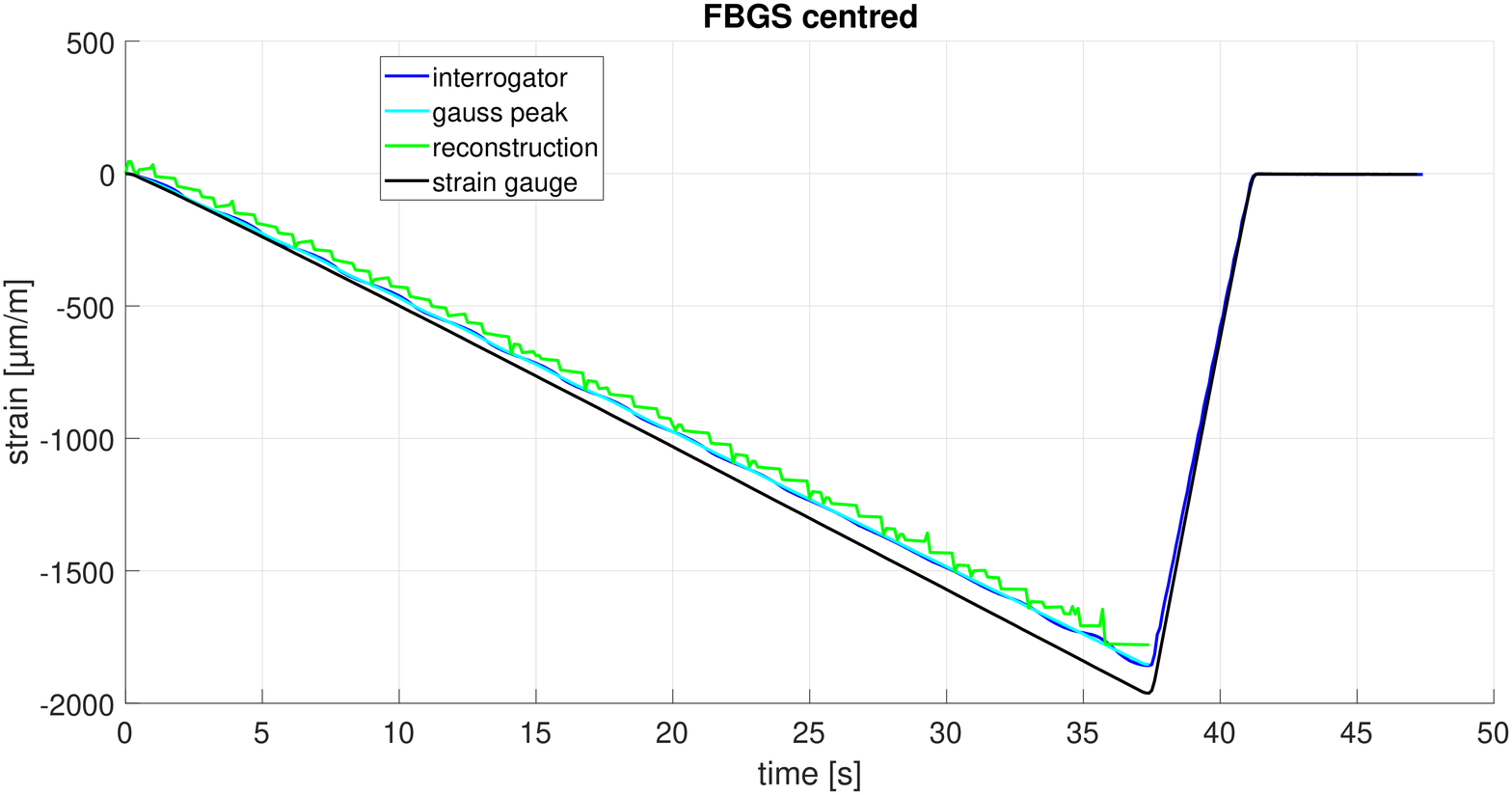}
\includegraphics[width=0.49\linewidth]{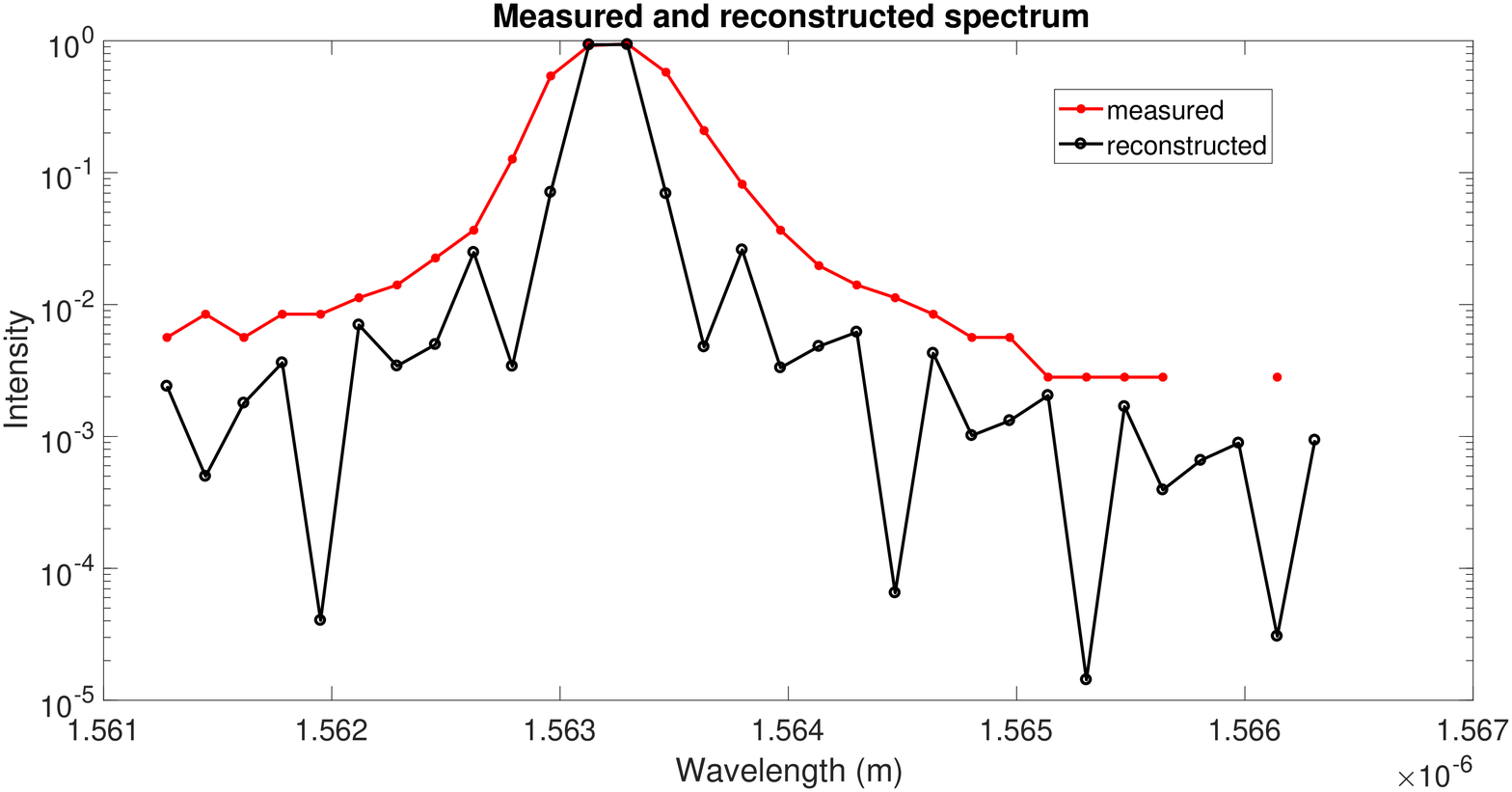}
\caption{Homogeneous model applied to a moderately inhomogeneous strain gradient (left). No evaluation of the FBG follows the strain gauge, the strain is systematically underestimated. 
The measured spectrum (right) and its reconstruction using a homogeneous model shows the model misfit.}
\label{fig:strain_mittel}
\end{figure}

Since the FBG is no longer homogeneous due to the applied force, it becomes necessary to resolve the strain over the FBG spatially. 
To this end, we employ the model of Section~\ref{ssec:chirp}. 
The presence of a strain gradient means that the period $\Lambda$ of the FBG refractive index distribution is no longer constant but varies locally, which we can model with a chirp function $\Phi$. 
The strain gauge only measures an average strain along the FBG, hence a different reference against which to compare our reconstruction is needed. 
To this end, we carry out a finite element simulation of the experiment using Abaqus CAE V6.14. 
The beam is modelled  with eight-node brick element (C3D8) with linear-elastic material behaviour with a Young's Modulus of $70\,GPa$ and a Poisson ratio of $0.3$. 
The beam is in contact with rolls as analytical rigid bodies. 
The contact behaviour is a hard contact in the normal direction without friction in the tangential direction. The strain analysis is obtained by way of Abaqus' path option. 
Different paths describe the three different sensors (centric and eccentric FBG sensor and electrical strain gauge). 
Using the paths we can resolve the strain along the each sensor, which we can use as a reference for our reconstruction. 
Note that this reference may itself deviate from the true strain since it stems from a simulation. 
In Figure~\ref{fig:mittelVG_lowreg} we show a reconstruction without regularization, and in Figure \ref{fig:mittelVG} the regularization with a hand-tuned regularization parameter is shown. 
The regularization is able to produce a smooth strain curve, but is observed to underestimate the strain magnitude. 
We remark that this may also be explained by the wavelength discretization and the resulting inaccuracy in the determination of the spectrum peak. 
In our setup, this inaccuracy may yield a strain offset of up to approximately $1.5*10^{-4} \frac{\mu m}{m}$, which would place the reconstruction right on top of the reference. 
Another explanation is that we are seeing the onset of birefringence. 
In Figure~\ref{fig:femvstrue} we compare a simulation of the FBG spectrum using the strain from the finite element simulation and the measurements. 
While the peak intensity of the measurements is lower than in the simulation, peak position and width of the peak agree well. 
Note that we used a much finer wavelength discretization in the simulation such that we can observe fine details in the sidelobes which are lost in the measurement due to poor wavelength resolution.

\begin{figure}
\includegraphics[width=\linewidth]{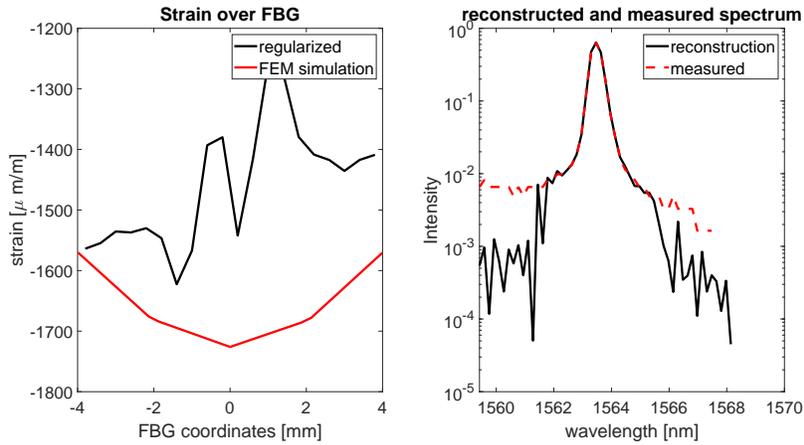}
\caption{Comparison of unregularized strain reconstruction with an FEM simulation (left); measured and reconstructed spectrum (right). As expected we obtain an oscillatory reconstruction, but are able to fit the measured spectrum closely.}
\label{fig:mittelVG_lowreg}
\end{figure}

\begin{figure}
\includegraphics[width=\linewidth]{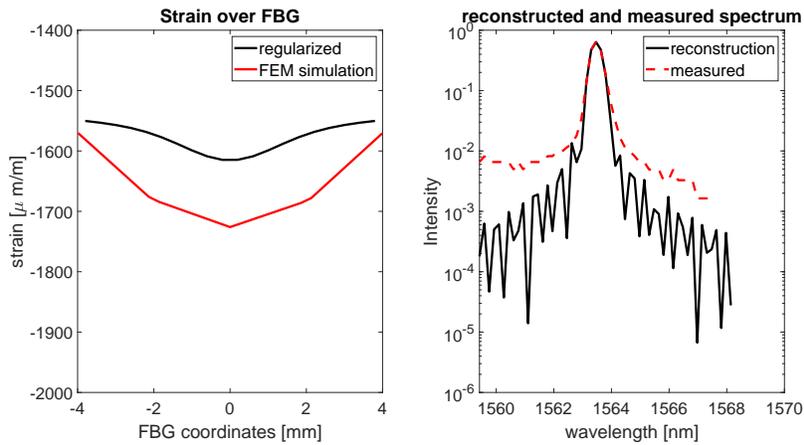}\caption{Comparison of regularized strain reconstruction with an FEM simulation (left); measured and reconstructed spectrum (right). 
The regularization achieves a smooth solution, however the reconstructed strain values are slightly damped.}
\label{fig:mittelVG}
\end{figure}

\begin{figure}
\includegraphics[width=0.49\linewidth]{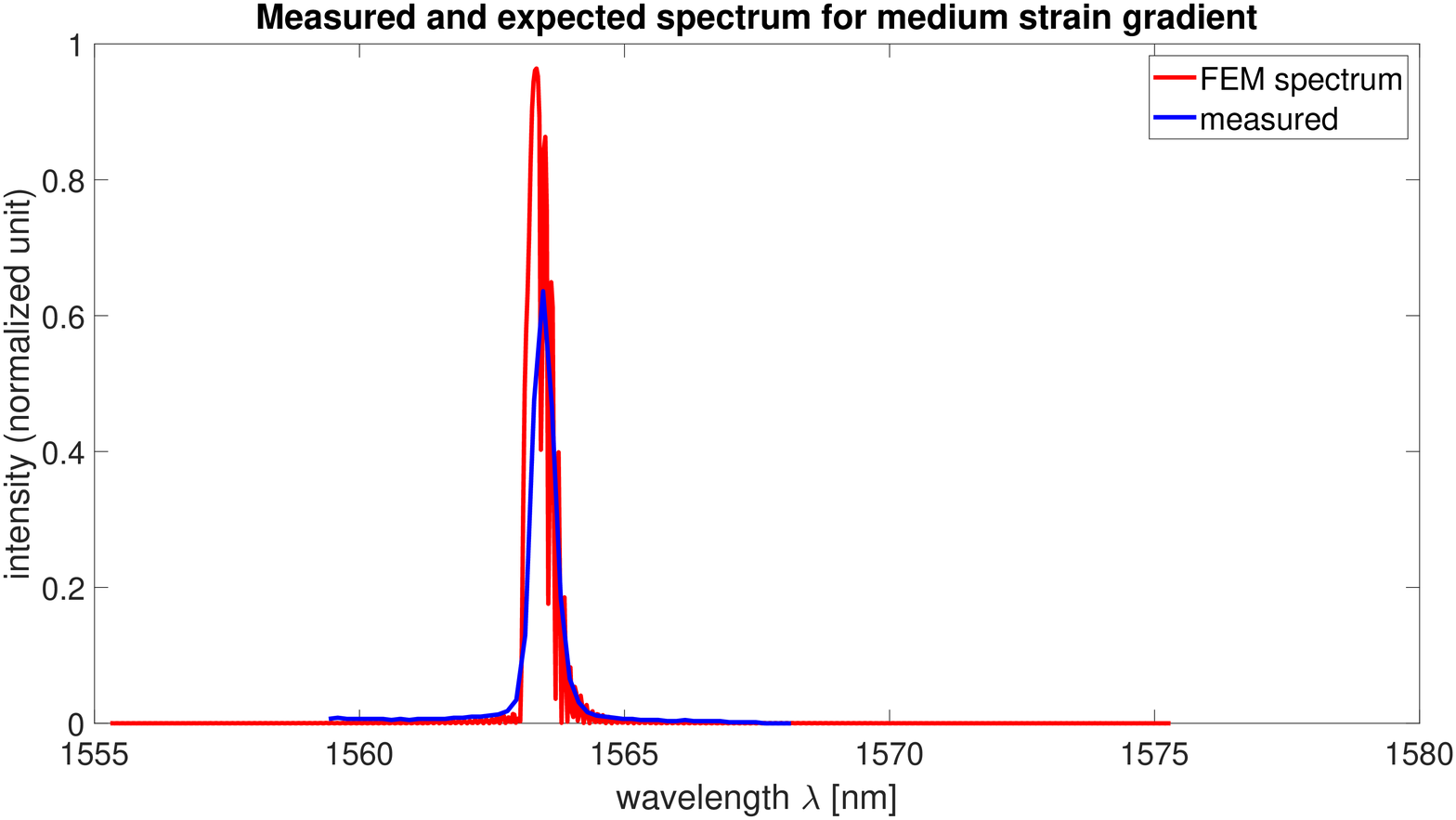}\includegraphics[width=0.49\linewidth]{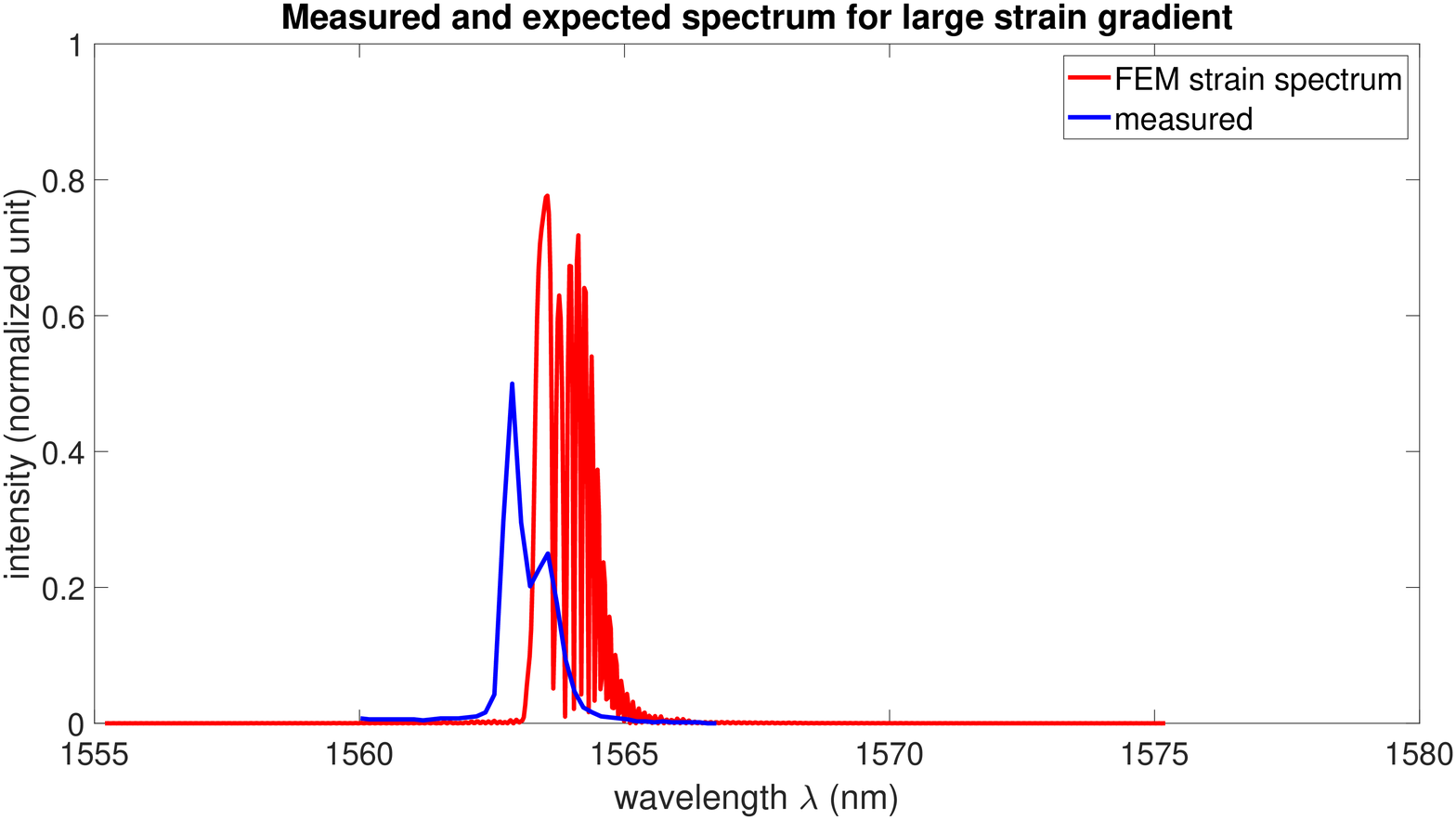}\caption{Measured spectra and spectra from FEM strain for inhomogeneous strain distributions. Left: medium strain gradient. Right: large strain gradient. For the medium strain, peak position and peak width match. For the large strain gradient, this is no longer the case, likely due to birefringence. Hence our model fails in the large strain gradient case.}\label{fig:femvstrue}
\end{figure}

\subsection{Analysis of Experimental Data: Large Strain Gradient} \label{ssec:high}

Our intention was to spatially resolve the strain along the FBG also for the large strain gradient case and compare the reconstruction to the finite element simulation. 
However, the reconstructions were not satisfying. 
We therefore simulated the expected spectrum given the strain obtained from the finite element simulation and compared it to the measured data, see Figure~\ref{fig:femvstrue}. 
It is easy to see that the spectra do not fit, i.e., the model must be incorrect. 
This is likely due to birefringence (see Remark~\ref{rem:biref}), since the large strain leads to significant transverse loading on the fibre, superimposed to the axial strain. 
It is the subject of future work to fully integrate this effect. 
A forward modelling without inversion can be found in \cite{PetersFEM}.

\subsection{Calibration of an Apodized FBG}

In our final experiment we leave the previous setting. 
By courtesy of Dr.\ Detleff Hofmann (German Federal Institute for Materials Research and Testing, BAM) we were provided with a high-resolution FBG spectrum. 
The only information provided was that the FBG is apodized (cf.\ Fig.~\ref{fig:FBGtypes}). 
None of the  characteristic FGB parameters were known, not even the grating length. 
As before, we simply guessed the necessary parameters to conduct the regularization as presented in Section \ref{ssec:apodized}. 
We set $L=6\,$mm and $n_0=1.46$, and calculated $\Lambda_0$ from the FBG peak $\lambda_B$. 
The spectra now have a wavelength resolution of $5\,$pm, i.e., a 33 times higher resolution than the previous set of measurements. 
The spectra are given in dB and range from $-10$dB to $-60$dB, i.e., can be considered essentially noise-free. 
In the reconstructions we would expect a parabolic curve for both $\dndc$ and $\dnac$. For $\dnac$ this is clearly the case. 
However, we were not able to find a smoother solution for $\dndc$. Intuitively we would attribute this to the unknown FBG-length $L$, since $\dndc$ shows strong artefacts at the boundaries. The reconstructions were surprisingly stable against variation of the regularization parameter, and most looked similar to that shown in Figure~\ref{fig:rec_bam}.

\begin{figure}
\includegraphics[width=\linewidth]{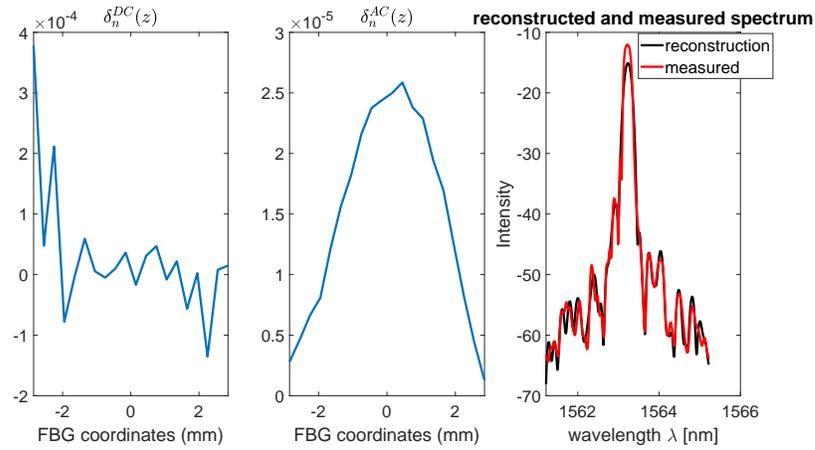}
\caption{Reconstruction of an apodized FBG from high-resolution measured data. 
We expected a parabolic curve of $\dndc$ and $\dnac$. The spectra match well.}
\label{fig:rec_bam}
\end{figure}

\section*{Conclusion}

We have presented the basic physics and modelling for the spectrum of FBG sensors and cast the recovery of the refractive index modulation from the FBG spectra as a nonlinear inverse problem.
We have discussed some sources of non-uniqueness in the reconstruction from measured FBG data and
demonstrated that regularization is able to reconstruct complex FBG configurations given that the parameters are chosen appropriately. 
We have also shown that choosing the parameters is difficult, and that conventional parameter selection rules fail. 
Therefore, the development of new ways of determining the regularization parameters is crucial but has to be postponed to a later publication. 
Finally, we evaluated our algorithm in experiments with real data. 
Despite the low data quality, we were able to find good approximations of the unknowns, which we verified against other experimental data or a finite element simulation measurements were not feasible. 
The paper demonstrates that regularization opens the possibility to extract significantly more data from FBG measurements than used in most contexts, opening this comparatively inexpensive and well-established technology to more involved experiments and practical uses.

\section*{Acknowledgement}
The work by S. Hannusch is a part of a measure, which is co-financed by tax revenue based on the budget approved by the members of the Saxon state parliament. D. Gerth was supported by Deutsche Forschungsgemeinschaft (DFG) under project GE3171/1-1. Financial support is gratefully acknowledged.

\end{document}